\tikzstyle{bag} = [align=center]
\newtheorem{proposition}{Proposition}
\newtheorem{theorem}{Theorem}
\newtheorem{definition}{Definition}
\newtheorem{lemma}{Lemma}
\tikzset{cross/.style={cross out, draw=blue, minimum size=2*(#1-\pgflinewidth), inner sep=0pt, outer sep=0pt},
cross/.default={1.7pt}}
\definecolor{newred}{rgb}{0.65, 0.16, 0.16}
\definecolor{brg}{rgb}{0.0, 0.26, 0.45}
 \newcommand{\badat}{\begin{alignedat}}
 \newcommand{\eadat}{\end{alignedat}}
 \def\be{\begin{equation}}
\def\ee{\end{equation}}
\def\p{\partial}
\newcommand{\pink}[1]{\textcolor{\pink}{#1}}
\definecolor{dblue}{rgb}{0.2,0.50,0.80}
 \definecolor{dred}{rgb}{0.65,0.10,0.20} 
\definecolor{dblue}{rgb}{0.2,0.50,0.80}
\def\Acal{\mathcal{A}}
\def\Scal{\mathcal{S}}
\def\bz{{\bar z}}
\numberwithin{equation}{section} % equation numbers follow sections
\begin{document}

 \begin{titlepage}
  \thispagestyle{empty}
  \begin{flushright}
  CPHT-RR078.122023
  \end{flushright}
  
  \bigskip \bigskip
   
  \begin{center}

        \baselineskip=13pt {\Huge \scshape{
       Distributional Celestial Amplitudes
        %\\ \vspace{0.5cm} 
        }}
  
      \vskip1cm 

   \centerline{\large Majdouline Borji$^{*\dag}$, Yorgo Pano$^*$
   }

\bigskip \bigskip
 
\centerline{\em $^*$CPHT, CNRS, Ecole Polytechnique, IP Paris, F-91128 Palaiseau, France}
\centerline{\em $^\dag$Max-Planck Institute for Mathematics in the Sciences
Inselstr. 22, 04103 Leipzig, Germany}

\bigskip \bigskip
  
\end{center}

\begin{abstract}
  \noindent 
Scattering amplitudes are tempered distributions, which are defined through their action on functions in the Schwartz space $\mathcal{S}(\mathbb{R})$ by duality. For massless particles, their conformal properties become manifest when considering their Mellin transform. Therefore we need to mathematically well-define the Mellin transform of distributions in the dual space $\mathcal{S}'(\mathbb{R}^+)$. In this paper, we investigate this problem by characterizing the Mellin transform of the Schwartz space $\mathcal{S}(\mathbb{R}^+)$. This allows us to rigorously define the Mellin transform of tempered distributions through a Parseval-type relation. Massless celestial amplitudes are then properly defined by taking the Mellin transform of elements in the topological dual of the Schwartz space $\mathcal{S}(\mathbb{R}^+)$. We conclude the paper with applications to tree-level graviton celestial amplitudes.

\end{abstract}

\end{titlepage}

\setcounter{tocdepth}{2}
\tableofcontents

\newpage
%%%%%%%%%%%%%%%%%%%%%%%%%%%%%%%%%%%%%%%%%%%%%%%%%%%%%%%%%%%%%%%%%
\section{Introduction}
%%%%%%%%%%%%%%%%%%%%%%%%%%%%%%%%%%%%%%%%%%%%%%%%%%%%%%%%%%%%%%%%%
The Mellin transform plays an important role in many fields in mathematics such as the asymptotic study of Gamma related functions. In number theory, it provides a strong tool for the estimation of the coefficients of Dirichlet series. It has also other applications in other areas in mathematics such as the estimation of the asymptotics of integral forms and this list of applications is not exhaustive. However, the importance of the Mellin transform is not only restricted to the field of mathematics, but it has also applications in physics and electrical engineering. We cite as an example the application of the Mellin transform in solving a potential problem in a wedge-shaped geometry where the solution satisfies the Laplace equation with boundary conditions on the edges. In the present paper, we focus on the application of the Mellin transform in celestial holography.

Scattering amplitudes are usually considered in momentum space, with the scattering states defined by the momenta of the particles being scattered. For translation invariant quantum field theories, the amplitudes in momentum space have a simple distributional aspect which consists of the product of a part which is smooth with respect to the external momenta and a momentum-conserving Dirac distribution $\delta^{(d)}(\sum_i p_i)$. In order to compute physical observables such as cross-sections and decay rates, scattering amplitudes are smeared against Gaussian functions or \textit{wave packets} \cite{Peskin:1995ev}. From a mathematical point of view, these are test functions on which the scattering amplitudes act. They belong to the \textit{Schwartz space} $\mathcal{S}(\mathbb{R})$, the space of test functions of rapid decrease. The scattering amplitudes are then defined by duality with the space of Schwartz functions, which in the mathematical literature is known as the space of \textit{tempered distributions} \cite{0f5719a4-0389-38e1-93d8-123291531606}. 

The group of isometries of Minkowski spacetime is the Poincaré group which also contains as a subgroup the Lorentz group. Making this symmetry manifest reveals the conformal properties of scattering amplitudes~\cite{Pasterski:2016qvg} due to the fact that the Lorentz group in $d+2$ dimensions is isomorphic to the conformal group in $d$ dimensions. This indeed plays a pivotal role in the celestial holography program which puts forward a duality between a theory of quantum gravity in the bulk of an asymptotically flat spacetime and a conformal field theory living on the codimension-two celestial sphere. For massless particles the change of basis is performed using the Mellin transform on the energy of that particle. This essentially trades the energy parameter $\omega$ for the weight under boosts and the conformal dimension $\Delta$. It was shown in~\cite{Pasterski:2017kqt} that when the conformal dimension is on the principal continuous series of the Lorentz group $\Delta\in \frac{d}{2}+i\mathbb R$, the states form a basis that is orthogonal and normalizable. Beyond this, the discovery of the relation between soft theorems and asymptotic symmetries~\cite{He:2014laa,Kapec:2014opa,Lysov:2014csa,Campiglia:2014yka,He:2014cra,Kapec:2015vwa,Campiglia:2015yka,Campiglia:2015qka,Kapec:2015ena,Campiglia:2015kxa,Campiglia:2016jdj,Campiglia:2016hvg} (or more generally the IR triangle~\cite{Strominger:2017zoo}\footnote{See review and references therein.}) highlighted the importance of this change of basis in the attempt for flat space holography.

The conformal properties of the scattering amplitude of massless particles become explicit when we perform the Mellin transform on the energies of each of the external particles. We then obtain the so-called \textit{celestial amplitude} which exhibits the properties of a conformal correlation function. These have been studied extensively~\cite{Pasterski:2017ylz,Schreiber:2017jsr,Puhm:2019zbl,Albayrak:2020saa,Arkani-Hamed:2020gyp,Gonzalez:2020tpi,Stieberger:2018edy,Donnay:2023kvm,Banerjee:2017jeg} and in most cases, the computation yields divergent integrals of the form
\be
I_n=\int_0^\infty d\omega~\omega^{n+i\Lambda-1}\, .
\ee
For certain scattering amplitudes, as the one considered in~\cite{Pasterski:2017ylz}, the integrals appearing are of the type $I_0$, which are divergent but recognizable as the Dirac $\delta$-distribution. In more general cases this issue was addressed by analytically continuing the conformal dimensions of the principal series. 
In this paper we rigorously study this change of basis by considering the Mellin transform of tempered distributions. This shows that all massless celestial amplitudes are well-defined by having a finite action on a certain class of functions which we identify as the Mellin transform of the Schwartz space, which is a class of meromorphic functions on the complex plane with simple poles at non-positive integers.

From a mathematical point of view, the extension of the Mellin transform was studied extensively in \cite{Fung}, \cite{Zemanian} and \cite{alma}. In \cite{alma}, the authors proceed by defining the Mellin transform of distributions over the half-line through 
\begin{equation}\label{conventional}
    \mathcal{M}(T)(s)=\langle T,x^{s-1}\rangle.
\end{equation}
In order to give a sense to \eqref{conventional}, a suitable set $\mathcal{I}(a,b)$ of test functions is introduced in \cite{alma} and it consists of all infinitely differentiable functions $\phi$ on $\mathbb{R}^+$ with derivatives rapidly decreasing at both $0$ and $\infty$. The space of test functions $\mathcal{C}^{\infty}_c(\mathbb{R})$ is a subspace of $\mathcal{I}(a,b)$ which implies that $\mathcal{I}'(a,b)$ is a subspace of distributions in $\mathcal{D}'(\mathbb{R}^+)$. Then, \eqref{conventional} allows to define the Mellin transform of distributions $\mathcal{I}'(a,b)$. The resulting function $\mathcal{M}(T)(s)$ for these distributions is holomorphic in the strip $a<\mathcal{R}(s)<b$. The drawback of this method is that it does not include the Mellin transform of distributions for which $\mathcal{M}(T)(s)$ is not necessarily a holomorphic function but rather a distribution as it is the case for some distributions in $\mathcal{S}'(\mathbb{R}^+)$.

The techniques used in \cite{Fung} are based on Fourier analysis from which follow the properties of the Mellin transform of functions in $\mathcal{C}^{\infty}_o(\mathbb{R}^+)$ and distributions in $\mathcal{D}'(\mathbb{R}^+)$. Unfortunately, these methods cannot be used in the study of the Mellin transform of the Schwartz space 
\begin{equation}
    \mathcal{S}(\mathbb{R}^+):=\left\{\phi\in\mathcal{C}^{\infty}(\mathbb{R}^+)\left|\right.~\forall (\alpha,\beta)\in\mathbb{N}^2: ~\sup_{x\in\mathbb{R}^+}\left|x^{\alpha}\phi^{(\beta)}(x)\right|<+\infty\right\}
\end{equation}
and of its dual space. More precisely, the Mellin transform in \cite{Fung} is expressed using the Fourier transform as follows 
\begin{equation}
    \mathcal{M}(\phi)\left(s+it\right)=\mathcal{F}\left(\phi(e^u)e^{u}\right)(t),
\end{equation}
which is well-defined for $\phi$ in $\mathcal{C}^{\infty}_o(\mathbb{R}^+)$, but is ill-defined for functions in $\mathcal{S}(\mathbb{R}^+)$. 

Hence, we proceed in this paper by considering the Mellin transform independently from an a priori knowledge of Fourier or Laplace transforms and we provide a general method that allows to define the Mellin transform of any distribution in $\mathcal{S}'(\mathbb{R}^+)$ through 
\begin{equation}
\langle\mathcal{M}(T),{f}\rangle_{{{\tilde{\mathcal M}}^{+\prime}},{\tilde{\mathcal M}}^+}=\langle T,\mathcal{M}^{-1}(f)\rangle_{\mathcal S'(\mathbb R^+),\mathcal S(\mathbb R^+)}\, ,
\end{equation}
where ${\tilde{\mathcal M}}^+$ is the Mellin transform of the Schwartz space $\mathcal{S}(\mathbb{R}^+)$ and $\tilde{\mathcal M}^{+\prime}$ its topological dual.

The paper is organized as follows: we start by introducing the notations and conventions which are used throughout the paper. In section~\ref{sec2}, we characterize the Mellin transform of the Schwartz space $\mathcal S(\mathbb R^+)$ which we denote by $\tilde{\mathcal M}^+$ and establish that the Mellin transform defines an isomorphism between the spaces $\mathcal S(\mathbb R^+)$ and $\tilde{\mathcal M}^+$ by a precise correspondence between individual terms in the asymptotic expansion of an original function and singularities of its Mellin transform.  In section~\ref{sec3}, we use Parseval's relation for the Mellin transform to define the dual space of $\tilde{\mathcal M}^{+}$. Massless celestial amplitudes belong to $\tilde{\mathcal M}^{+\prime}$ and have a well-defined bracket with functions in $\tilde{\mathcal M}^+$. We conclude the paper with section~\ref{sec4} in which we rigorously compute the tree-level graviton celestial amplitudes by applying the methods developed in section~\ref{sec3}. We finally end the paper with some remarks and future perspectives.

%%%%%%%%%%%%%%%%%%%%%%%%%%%%%%%%%%
\subsection*{Notations and Conventions}
In the sequel, we use the following notations 
\begin{equation}
    \mathbb{N}^{0}:=\mathbb{N}\setminus \{0\},~~\mathbb{R}^+:=\left[0,+\infty\right),~~\mathbb{R}^{+\star}:=\mathbb{R}^+\setminus \{0\},~~\mathbb Z^-:=\mathbb Z\setminus \mathbb N^0.
\end{equation}
We also use the notation
\be
\left\|\phi\right\|_\infty=\sup_{x\in\mathbb R^+}|\phi(x)|\, .
\ee
The following Fourier and inverse Fourier transforms conventions are also used: given $\phi$ in $L^2(\mathbb{R})$, the Fourier transform of $\phi$ is defined as follows 
\begin{equation}
    \mathcal{F}(\phi)(\xi):=\int_{\mathbb{R}}\phi(x)e^{-ix\xi}~dx.
\end{equation}
Similarly, its inverse Fourier transform is defined as
\begin{equation}
    \mathcal{F}^{-1}(\phi)(x):=\frac{1}{{2\pi}}\int_{\mathbb{R}}\phi(\xi)e^{ix\xi}~d\xi.
\end{equation}
Given a function $\phi$ that satisfies
$$\int_{\mathbb R^+}dx~|\phi(x)|x^{c-1}<+\infty,$$ we define its Mellin transform as follows
\begin{equation}
    \mathcal{M}(\phi)(c+it)=\int_{\mathbb{R}^+}\phi(x)~x^{c+it-1}~dx~.
\end{equation}
The definition domain of a Mellin transform is a strip. Hence, we introduce the notation $$St( \alpha,\beta):=\left\{s\in\mathbb{C},~\alpha<\mathcal{R}(s)<\beta\right\},$$
where $\mathcal{R}(s)$ is the real part of $s$. We introduce the Schwartz spaces
\begin{equation}
    \mathcal{S}(\mathbb{R}):=\left\{\phi\in \mathcal{C}^{\infty}(\mathbb{R})~|\forall(\alpha,\beta)\in\mathbb{N}^2:~\sup_{x\in\mathbb{R}}\left|x^{\alpha}\partial^{\beta}_x\phi(x)\right|<+\infty\right\}
\end{equation} 
and
\begin{equation}
    \mathcal{S}\left(\mathbb{R}^{+}\right):=\left\{\chi^+(x)\phi(x)\left|\right. \phi \in\mathcal{S}(\mathbb{R})\right\},
\end{equation}
where $\chi^+$ denotes the characteristic function of the semi-line $\mathbb{R}^{+}$. We denote the dual of $\mathcal{S}(\mathbb{R}^+)$ by $\mathcal{S}'(\mathbb{R}^+)$, and call it the Schwartz space of tempered distributions over the half-line. For an extensive topological study of the space $\mathcal{S}(\mathbb{R}^+)$ and its dual space $\mathcal{S}'(\mathbb{R}^+)$, we refer the reader to \cite{Pott}.
%%%%%%%%%%%%%%%%%%%%%%%%%%%%%%%%%%%%%%%%%%%%%%%%%%%%%%%%%%%%%%%%%

\section{The Mellin Transform of the Schwartz Space}\label{sec2}
In order to define massless celestial amplitudes, we need to define the space of test functions against which they will be smeared. In this section, we identify the space of conformal weight test functions as the Mellin transform of the Schwartz space $\mathcal{S}(\mathbb{R}^+)$. The main result of this section is theorem~\ref{mainres}. First, we establish some basic properties of the Mellin transform $\mathcal{M}(\phi)$ with $\phi$ in $\mathcal{S}(\mathbb{R}^+)$, that we gather in the following proposition:
\begin{proposition}\label{prop1}
   Given $\phi$ in $\mathcal{S}(\mathbb{R}^+)$, the following properties hold:
    \begin{itemize}
        \item[i)] The fundamental strip\footnote{defined as the largest open strip on which the Mellin transform $\mathcal M(\phi)$ is defined.} of $\mathcal{M}(\phi)$ is $St(0,+\infty)$ and $\mathcal{M}(\phi)$ is holomorphic in $St( 0,+\infty)$.
        \item[ii)] For all $N\in\mathbb{N}$, $\mathcal{M}(\phi)(s)$ has a meromorphic continuation to $\mathcal{R}(s)>-N$ with simple poles at $s=-n$ with $0\leq n \leq N-1$ and no other singularities~\cite{ZagierAppendixTM}.
        \item[iii)] For $c>0$ we have
        \begin{equation}
            \mathcal{M}(\phi)\left(c+i\cdot\right)\in\mathcal{S}(\mathbb{R})
        \end{equation}
        and for $c\in\mathbb{R}^-\setminus \mathbb{Z}^-$ 
        \begin{equation}
            t^{\alpha}\mathcal{M}(\phi)\left(c+it\right)
        \end{equation}
    is uniformly bounded with respect to $t\in\mathbb{R}$ for all $\alpha\in\mathbb{N}$.
    \end{itemize}
\end{proposition}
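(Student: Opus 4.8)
The plan is to work directly from the integral definition and to isolate the contributions of the two endpoints $x=0$ and $x=+\infty$. For part i), since $\phi\in\mathcal{S}(\mathbb{R}^+)$ is smooth up to the origin with $\phi(0)$ generically nonzero, one has $\phi(x)x^{s-1}\sim\phi(0)x^{s-1}$ near $0$, so $\int_0^1\phi(x)x^{s-1}dx$ converges precisely when $\mathcal{R}(s)>0$, while the rapid decay of $\phi$ makes $\int_1^\infty\phi(x)x^{s-1}dx$ converge for every $s$. This pins the fundamental strip to $St(0,+\infty)$, and holomorphy there follows from the standard argument of differentiating under the integral sign (or Morera), the integrals $\int_0^\infty\phi(x)(\ln x)^k x^{s-1}dx$ converging locally uniformly on $St(0,+\infty)$. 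For part ii) I would split $\mathcal{M}(\phi)(s)=\int_0^1+\int_1^\infty$: the tail is entire in $s$ by rapid decay, and into the head I insert the Taylor expansion $\phi(x)=\sum_{n=0}^{N-1}\frac{\phi^{(n)}(0)}{n!}x^n+R_N(x)$ with $R_N(x)=O(x^N)$. The monomial terms integrate to $\frac{\phi^{(n)}(0)/n!}{s+n}$, holomorphic apart from a simple pole at $s=-n$ with residue $\phi^{(n)}(0)/n!$, while $\int_0^1 R_N(x)x^{s-1}dx$ is holomorphic for $\mathcal{R}(s)>-N$. Since $N$ is arbitrary, this is the claimed meromorphic continuation with simple poles exactly at the non-positive integers.

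For part iii) the key device is the Mellin image of the Euler operator $\theta=x\frac{d}{dx}$: one integration by parts gives $\mathcal{M}(\theta\psi)(s)=-s\,\mathcal{M}(\psi)(s)$ for any $\psi\in\mathcal{S}(\mathbb{R}^+)$, the boundary term vanishing because $\psi$ decays at $\infty$ and $x^s\to0$ at the origin when $\mathcal{R}(s)>0$. Hence $s^p\mathcal{M}(\psi)(s)=(-1)^p\mathcal{M}(\theta^p\psi)(s)$, and $\theta^p\psi\in\mathcal{S}(\mathbb{R}^+)$ as $\theta^p$ is a finite combination of the $x^j\partial_x^j$. For $c>0$, smoothness of $t\mapsto\mathcal{M}(\phi)(c+it)$ comes from differentiating under the integral, $\partial_t^q\mathcal{M}(\phi)(c+it)=i^q\int_0^\infty\phi(x)(\ln x)^q x^{c+it-1}dx$, which converges absolutely since $\int_0^1|\ln x|^q x^{c-1}dx<\infty$ for $c>0$. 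To extract the decay I apply $\theta^p$ to $(\ln x)^q\phi$; using $\theta[(\ln x)^q\phi]=(\ln x)^q\,\theta\phi+q(\ln x)^{q-1}\phi$, iteration produces a finite sum of terms $(\ln x)^{q'}\eta$ with $q'\le q$ and $\eta\in\mathcal{S}(\mathbb{R}^+)$, each of whose Mellin transform is bounded on the line $\mathcal{R}(s)=c$ by $\int_0^\infty|\eta(x)||\ln x|^{q'}x^{c-1}dx<\infty$. Thus $|s^p\partial_t^q\mathcal{M}(\phi)(c+it)|\le C_{p,q}$, and since $|s|\ge|t|$ this controls every seminorm $\sup_t|t^p\partial_t^q\mathcal{M}(\phi)(c+it)|$, giving $\mathcal{M}(\phi)(c+i\cdot)\in\mathcal{S}(\mathbb{R})$.

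For $c\in\mathbb{R}^-\setminus\mathbb{Z}^-$ I would read the same relation as $\mathcal{M}(\phi)(s)=-\frac1s\mathcal{M}(\phi')(s+1)$ and iterate it $N$ times to obtain the functional equation $\mathcal{M}(\phi)(s)=\frac{(-1)^N}{s(s+1)\cdots(s+N-1)}\mathcal{M}(\phi^{(N)})(s+N)$, valid as an identity of meromorphic functions. Choosing $N$ with $-N<c<0$, so that $\mathcal{R}(s+N)=c+N>0$ on the line $\mathcal{R}(s)=c$, the factor $\mathcal{M}(\phi^{(N)})(c+N+i\cdot)$ is, by the case just treated applied to $\phi^{(N)}\in\mathcal{S}(\mathbb{R}^+)$, a Schwartz function of $t$, while the denominator $\prod_{k=0}^{N-1}(c+k+it)$ is bounded below in modulus (here $c\notin\mathbb{Z}^-$ guarantees $c+k\ne0$) and grows like $|t|^N$. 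Consequently $t^\alpha\mathcal{M}(\phi)(c+it)$ is a rapidly decreasing function divided by a quantity bounded away from zero, hence uniformly bounded for every $\alpha\in\mathbb{N}$.

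The main obstacle is precisely this $c<0$ regime. On such lines $\mathcal{M}(\phi)$ exists only through meromorphic continuation, and the naive decomposition into pole terms plus a holomorphic remainder exhibits only $O(|t|^{-1})$ decay, the genuine rapid decrease emerging only after cancellation among these pieces (as one already sees for $\mathcal{M}(e^{-x})=\Gamma$). The functional equation above is what makes the decay manifest, by transporting the estimate back to a line of positive real part where the first part of iii) applies; the remaining work is the routine bookkeeping of boundary terms in the integrations by parts and of the logarithmic factors generated by $\partial_t$.
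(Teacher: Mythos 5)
Your proof is correct, and while parts i) and ii) follow essentially the paper's own route (endpoint analysis plus Morera for holomorphy; Taylor expansion of $\phi$ at the origin with the pole terms $\frac{a_n}{s+n}$ split off), your part iii) is genuinely different and in fact cleaner. The paper proves the $c>0$ Schwartz property by justifying $\partial_t$ under the integral via Cauchy's integral formula plus Fubini, and then handles $t^\alpha$ through the substitution $x=e^u$ and iterated integration by parts, producing the expansion $t^{\alpha}\mathcal{M}(\phi)(c+it)=i^{\alpha}\sum_{k,p}\beta(\alpha,k,p)\,c^{\alpha-k}\mathcal{M}(\phi^{(p)})(c+p+it)$; for the negative-$c$ case this forces a term-by-term case analysis, with a second Taylor expansion of $\phi^{(p)}$ whenever $c+p<0$. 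Your Euler-operator identity $s^{p}\mathcal{M}(\psi)(s)=(-1)^{p}\mathcal{M}(\theta^{p}\psi)(s)$ is the same integration by parts in disguise ($\theta=\partial_u$ under $x=e^u$), but packaging the iteration as the single functional equation $\mathcal{M}(\phi)(s)=\frac{(-1)^{N}}{s(s+1)\cdots(s+N-1)}\mathcal{M}(\phi^{(N)})(s+N)$ — extended off $\Re(s)>0$ by uniqueness of meromorphic continuation, exactly as the paper extends its own identity (2.26) — lets you transport the whole estimate to one line of positive real part in one stroke: the numerator is Schwartz in $t$ by the case already proved, and the denominator both absorbs the $t^\alpha$ growth and makes the hypothesis $c\notin\mathbb{Z}^-$ transparent as the nonvanishing of the factors $c+k$. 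This avoids the paper's second Taylor expansion and its bookkeeping entirely, and it actually yields rapid decrease (not merely boundedness) of $\mathcal{M}(\phi)(c+it)$ on lines $c\in\mathbb{R}^-\setminus\mathbb{Z}^-$, a slightly stronger conclusion; your dominated-convergence justification of differentiating under the integral is likewise more elementary than the paper's detour through the Cauchy formula. One shared gloss, not a gap relative to the paper: the fundamental strip is exactly $St(0,+\infty)$ only when $\phi(0)\neq 0$ (if $\phi$ vanishes to order $m$ at the origin the strip of absolute convergence widens to $\Re(s)>-m$); your hedge ``generically nonzero'' acknowledges this, whereas the paper's proof only establishes convergence for $c>0$ without addressing maximality at all.
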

\begin{proof}
\begin{itemize}
    \item First, we prove i). Given $\phi\in\mathcal{S}(\mathbb{R}^+)$, we have for $s=c+it$
    \begin{equation}
        \left|\int_{\mathbb{R}^+}\phi(x)~x^{s-1}dx\right|\leq \int_{\mathbb{R}^+}|\phi(x)|~x^{c-1}dx.
    \end{equation}
    Performing the change of variable $u=\log x$, we write 
    \begin{equation}\label{3.7}
         \int_{\mathbb{R}^+}|\phi(x)|~x^{c-1}dx=\int_{\mathbb{R}}~|\phi(e^u)|e^{cu}du.
    \end{equation}
    Decomposing the integral in the right-hand side of \eqref{3.7} over the regions $\mathbb{R}^+$ and $\mathbb{R}^-$, we obtain for $c>0$ 
    \begin{equation}\label{1.9}
       \left|\mathcal{M}(\phi)(c+it)\right| \leq\frac{1}{c}\sup_{x\in\mathbb{R}^+}|\phi(x)|+\sup_{x\in\mathbb{R}^+}|\phi(x)x^{[c]+2}|<+\infty.
    \end{equation}
     This proves that $\mathcal{M}(\phi)$ is well-defined for $c>0$. In order to establish that $\mathcal{M}(\phi)$ is holomorphic on $St(0,+\infty)$, we consider a triangle $\vartriangle\subset St(0,+\infty)$ parameterized by $\omega:[0,1]\rightarrow \vartriangle$ with $\omega$ being piecewise continuously differentiable. The function $(u,y)\longmapsto \phi(u)u^{\omega(y)-1}\omega'(y),~u\in\mathbb{R}^+,~y\in[0,1]$ is measurable with respect to the
two-dimensional Lebesgue measure, and furthermore we have using \eqref{1.9}
\begin{multline}\label{1.10}
\int_0^1\int_{\mathbb{R}^+}\left|\phi(x)\right|~x^{\omega(y)-1}~\left|\omega'(y)\right|dx~dy\\\leq \left(\int_0^1 \left|\frac{\omega'(y)}{\omega(y)}\right|dy~\left\|\phi\right\|_{\infty}+\sup_{x\in\mathbb{R}^+}\left|\phi(x)x^{[\omega(\tilde{y})]+2}\right|~\int_0^1|\omega'(y)|dy\right)
\end{multline}
with $\omega(\tilde{y}):=\sup_{y\in[0,1]}\omega (y)$. Remembering that $\omega(y)\in \vartriangle\subset St(0,+\infty)$, we then deduce 
\begin{equation}
    \int_0^1\int_{\mathbb{R}^+}\left|\phi(x)\right|~x^{\omega(y)-1}~\left|\omega'(y)\right|dx~dy<+\infty.
\end{equation}
Hence the assumptions of Fubini's theorem are satisfied and we write
\begin{align*}
    \int_{\vartriangle} \mathcal{M}(\phi)(s)~ds&=\int_0^1 \mathcal{M}(\phi)(\omega(y))\omega'(y)~dy\\
    &=\int_{\mathbb{R}^+}\phi(x)\left(\int_0^1~x^{\omega(y)-1}\omega'(y)dy\right)dx\\
    &=\int_{\mathbb{R}^+}\phi(x)\left(\int_{\vartriangle}x^{s-1}ds\right)dx=0,
\end{align*}
where we used Cauchy's theorem together with the fact that $u^{s-1}$ is holomorphic in $St(0,+\infty)$ for all $u>0$. Using lemma 2 from~\cite{Butzer1997}, we also have that $\mathcal M(\phi)$ is continuous on its fundamental  strip $St(0,\infty)$. Since $\vartriangle\subset St(0,+\infty)$ is arbitrary, we deduce by Morera's theorem that $\mathcal{M}(\phi)$ is holomorphic on $St(0,+\infty)$.
\item The proof of ii) follows the same line of reasoning used in~\cite{ZagierAppendixTM} in which $\mathcal M (\phi)$ was shown to admit a meromorphic extension on $\mathbb C$. Given $N\in\mathbb N$, we perform a Taylor expansion of $\phi$ in $\mathcal M (\phi)$ as follows
\begin{equation}\label{Taylor2.8}
\mathcal M(\phi)(c+it) = \int_0^1 \left( \phi(x) -\sum_{n=0}^{N-1}a_n x^n\right)x^{c+it-1}dx+\sum_{n=0}^{N-1}\frac{a_n}{n+c+it} +\int_1^{+\infty} \phi(x) x^{c+it-1}dx\,,
\end{equation}
where $a_n:=\frac{\phi^{(n)}(0)}{n!}$ and $\phi^{(n)}$ is the $n^{th}$-order derivative of $\phi$. Since $\phi$ is in $\mathcal S(\mathbb{R^+})$, we have by Taylor-Lagrange's inequality
$$
 \left|\phi(x) -\sum_{n=0}^{N-1}a_n x^n\right|\leq \|\phi^{(n)}\|_{\infty} \ x^N,
$$
which implies that the first integral is well-defined for all $c>-N$. The second term contains the poles $c+it=-n$ for $n\in{0,\cdots,N-1}$ and the third term is well-defined for all $ (c,t) \in\mathbb R^2$. This holds for all $N\in\mathbb N$ which directly gives that $\mathcal M(\phi)$ has a meromorphic extension to $\mathbb C$ with simple poles at $s=c+it\in\mathbb Z^-$.
\item In this part, we establish iii). Given $c>0$, we prove first that 
$$\mathcal{M}(\phi)\left(c+i\cdot\right)\in\mathcal{S}(\mathbb{R}).$$
Lemma 2 in~\cite{Butzer1997} gives that $\mathcal{M}(\phi)$ is a continuous function on the line $\{c\}\times i\mathbb R$. Now, let us verify that for $c>0$, the Mellin transform $\mathcal{M}(\phi)(c+it)$ is infinitely differentiable with respect to $t$ and furthermore we have for all $\beta\in\mathbb N$
\begin{equation}\label{1200}
    \partial^{\beta}_t\mathcal{M}(\phi)\left(c+it\right)=i^{\beta}\int_{\mathbb{R}^+}\phi(x)(\log x)^{\beta}x^{c+it-1}dx.
\end{equation}
Let $\mathcal{C}_{\epsilon}(it)$ be the circle with radius $\epsilon<1$ and origin $s=it$. Using  Cauchy's integral formula for derivatives we have for $u>0$
\begin{align}\label{121}
   (\log u)^{\beta}~u^{it}&=(-i)^{\beta}\partial^{\beta}_tu^{it}=\frac{\beta!}{2\pi i}\int_{\mathcal{C}_{\epsilon}(it)}\frac{u^{w}}{(w-it)^{\beta+1}}dw\nonumber\\
   &=\frac{\beta!}{2\pi i}\int_0^{2\pi}\frac{u^{\psi(y)}}{\left(\psi(y)-it\right)^{\beta+1}}\psi'(y)dy,
\end{align}
 where $\psi(y)=it+\epsilon e^{iy}$. We write 
 \begin{equation}\label{hope}
\int_{\mathbb{R}^+} 
     \left|\phi(u)\right|~u^{c-1+\epsilon\cos y}du=\int_{0}^1 
     \left|\phi(u)\right|~u^{c-1+\epsilon\cos y}du+\int_{1}^{+\infty}  
     \left| u^{c-1+\epsilon\cos y}~\phi(u)\right|du.
\end{equation}
\eqref{hope} is bounded by 
\begin{equation*}
    \|\phi\|_{\infty}  
   \int_{0}^1 
     u^{c-1+\epsilon \cos y}du+\int_{1}^{+\infty} 
     u^c\left|\phi(u)\right|du=\frac{ \|\phi\|_{\infty}}{c+\epsilon\cos y}+\int_{1}^{+\infty} 
     u^c\left|\phi(u)\right|du.
\end{equation*}
Hence, we deduce 
\begin{equation}
\int_{\mathbb{R}^+}\left|\phi(u)~u^{c-1+\psi(y)}\right|du\leq K
\end{equation}
with $K:=\frac{ \|\phi\|_{\infty}}{c-\epsilon}+\int_{\mathbb{R}^+}  u^c\left|\phi(u)\right|du$. Thus we obtain for $\beta\in\mathbb{N}$
 \begin{align*}
\int_0^{2\pi}\int_{\mathbb{R}^+}\left|\phi(u)\frac{u^{\psi(y)}}{\left(\psi(y)-it\right)^{\beta+1}}\psi'(y)\right| du dy 
     &\leq K~\int^{2\pi}_0 \left|\frac{\psi'(y)}{\left(\psi(y)-it\right)^{\beta+1}}\right| dy
     \\& \leq 2\pi K~ \epsilon^{-\beta}.
     \end{align*}
Therefore, the hypotheses of Fubini's theorem are fulfilled and we have by \eqref{121}
     \begin{equation}\label{hope2}
         \int_{\mathbb{R}^+}\phi(x)(\log x)^{\beta}x^{c+it-1}dx=\frac{\beta!}{2\pi i}\int_0^{2\pi}\frac{\psi'(y)}{\left(\psi(y)-it\right)^{\beta+1}}\left(\int_{\mathbb{R}^+}\phi(u)u^{c+\psi(y)-1}du\right)dy.
     \end{equation}
For $c>0$, we have 
\begin{equation}
   \mathcal{M}(\phi)\left(c+\psi(y)\right)= \int_{\mathbb{R}^+}\phi(u)~u^{c+\psi(y)-1}du.
\end{equation}
Hence, the right-hand side of \eqref{hope2} is given by 
\begin{equation}
    \frac{\beta!}{2\pi i}\int_0^{2\pi}\frac{\psi'(y)}{\left(\psi(y)-it\right)^{\beta+1}}~\mathcal{M}(\phi)\left(c+\psi(y)\right)dy.
\end{equation}
Applying again Cauchy's integral formula leads directly to \eqref{1200} for any $\beta\in\mathbb{N}$. Now, let us verify for $c>0$
$$\sup_{t\in\mathbb R}\left|t^{\alpha}\partial^{\beta}_t\mathcal{M}(\phi)\left(c+it\right)\right|<+\infty.$$
 Performing the change of variable $x=e^u$ in \eqref{1200}, we have
\begin{equation}\label{2166}
t^{\alpha}\partial_t^{\beta}\mathcal{M}\left(\phi\right)\left(c+it\right)=i^{\beta-\alpha}\int_{\mathbb{R}}\phi(e^u)u^{\beta}e^{cu} \partial^{\alpha}_u(e^{itu})du.
\end{equation}
Using the relation
\begin{equation}\label{hope217}
\partial^{k}_u\left(\phi(e^u)\right)=\sum_{p=1}^k \alpha_k(p)~e^{pu}\phi^{(p)}(e^u)
\end{equation}
with $\alpha_k(p)$ denoting a positive constant, together with integration by parts and the general Leibniz formula, we deduce 
\begin{align}
 \int_{\mathbb{R}}\phi(e^u)&u^{\beta}e^{cu} \partial^{\alpha}_u(e^{itu})du
    =(-1)^{\alpha}\int_{\mathbb{R}}\partial^{\alpha}_u\left(\phi(e^u)u^{\beta}e^{cu}\right) e^{itu}du\nonumber\\
 &= (-1)^{\alpha}\sum_{k=0}^{\alpha}\sum_{n=0}^{\alpha-k}\sum_{p=1-\delta_{k,0}}^{k}\gamma(\alpha,k,p,n,\beta)\chi_{\{\beta\geq n\}}~ c^{\alpha-k-n}\int_{\mathbb{R}}e^{(c+p)u}e^{itu}\phi^{(p)}(e^u)u^{\beta-n}du,\label{2177}
\end{align}
where 
$$\chi_{\{\beta\geq n\}}= \left\{
    \begin{array}{ll}
        1 & \mbox{if } \beta \geq n \\
        0 & \mbox{otherwise}
    \end{array}
\right.$$
and 
$$\gamma(\alpha,k,p,n,\beta):=\binom{\alpha}{k}\binom{\alpha-k}{n}\frac{\beta!}{(\beta-n)!}\alpha_k(p)~.$$
We use the bound 
$$u\leq e^u,~~~\forall u \in \mathbb{R}^+$$
to obtain for $p\leq \alpha$
\begin{equation}
\left|\int_{\mathbb{R}^+}e^{(c+p)u}e^{itu}\phi^{(p)}(e^u)u^{\beta-n}du\right|\leq \frac{1}{c}\sup_{{x\in\mathbb{R}^+}}\left|x^{(2c+\alpha+\beta)}\phi^{(p)}(x)\right|.
\end{equation}
We also have  
\begin{align}\label{2200}
\left|\int_{\mathbb{R}^-}e^{(c+p)u}e^{itu}\phi^{(p)}(e^u)u^{\beta-n}du\right|&\leq \frac{2}{c}\sup_{u\in\mathbb{R}^+}\left|e^{-\frac{cu}{2}}u^{\beta-n}\right|\left\|\phi^{(p)}\right\|_{\infty}\nonumber\\&\leq 2\left(c^{-1}+\sup_{u\in\mathbb{R}^+}\left|e^{-\frac{u}{2}}u^{\beta}\right|c^{n-\beta-1}\right)~\left\|\phi^{(p)}\right\|_{\infty}\nonumber\\
&\leq O(1)\left(c^{-1}+c^{n-\beta-1}\right)~\left\|\phi^{(p)}\right\|_{\infty},
\end{align}
where $O(1)$ is a constant that depends only on $\beta$. Combining \eqref{2166}-\eqref{2200}, we deduce
\begin{multline}
    \left|t^{\alpha}\partial^{\beta}_t\mathcal{M}(\phi)\left(c+it\right)\right|\leq O(1)~\left\|\gamma\right\|_{\infty} \left(\sum_{k=0}^{\alpha}c^{k-1}\right)\\\times\left\{\sup_{{x\in\mathbb{R}^+}}\left|x^{(2c+\alpha+\beta)}\phi^{(p)}(x)\right|+(1+c^{-\beta})\left\|\phi^{(p)}\right\|_{\infty}\right\}<+\infty,
\end{multline}
where $\left\|\gamma\right\|_{\infty}:=\sup_{0\leq k,p,n\leq \alpha}\left|\gamma(\alpha,k,p,n,\beta)\right|$ and $O(1)$ is a positive constant that depends only on $\alpha$ and $\beta$. 
\item In this part of the proof, we verify the last point of proposition \ref{prop1} that is for $c\in\mathbb{R}^-\setminus \mathbb{Z}^-$ 
        \begin{equation}
            t^{\alpha}\mathcal{M}(\phi)\left(c+it\right)
        \end{equation}
    is uniformly bounded with respect to $t\in\mathbb{R}$ for all $\alpha\in\mathbb{N}$.\\
For $\alpha\in\mathbb{N}^*$ and $c>0$, we have 
\begin{align}\label{1.13}
     t^{\alpha}~\mathcal{M}(\phi)(c+it)
     &=(-i)^{\alpha}\int_{\mathbb{R}}\phi(e^u)~e^{cu}~\partial^{\alpha}_u(e^{itu})du\nonumber\\
     &=i^{\alpha}\int_{\mathbb{R}}\partial^{\alpha}_u\left(\phi(e^u)~e^{cu}\right)e^{itu}~du~.
\end{align}
Using \eqref{hope217} together with the general Leibniz formula, we obtain that \eqref{1.13} is equal to 
\begin{equation}
    i^{\alpha}\sum_{k=0}^{\alpha}\sum_{p=1-\delta_{k,0}}^k \beta(\alpha,k,p)~c^{\alpha-k}\int_{\mathbb{R}}\phi^{(p)}(e^u)e^{(c+p)u}e^{itu}~du~,
\end{equation}
where $\beta(\alpha,k,p):=\binom{\alpha}{k}\alpha_k(p)$. Hence, we deduce for all $c>0$
\begin{equation}
    t^{\alpha}~\mathcal{M}(\phi)(c+it)=i^{\alpha}\sum_{k=0}^{\alpha}\sum_{p=1-\delta_{k,0}}^k \beta\left(\alpha,k,p\right)~c^{\alpha-k}\mathcal{M}(\phi^{(p)})(c+p+it).
\end{equation}
By the uniqueness of the meromorphic continuation of $\mathcal{M}(\phi)$, we deduce that 
\begin{equation}\label{2.26}
   t^{\alpha}~\mathcal{M}(\phi)(c+it)= i^{\alpha}\sum_{k=0}^{\alpha}\sum_{p=1-\delta_{k,0}}^k \beta\left(\alpha,k,p\right)c^{\alpha-k}\mathcal{M}(\phi^{(p)})(c+p+it)
\end{equation}
holds for all $c\in \mathbb{R}\setminus\mathbb{Z}^-$. If $c+p>0$, we use the bound \eqref{1.9} to obtain
\begin{equation}
    \left|\mathcal{M}(\phi^{(p)})(c+p+it)\right|\leq \frac{1}{c+p}\left\|\phi^{(p)}\right\|_{\infty}+\sup_{x\in\mathbb{R}^+}\left|\phi^{(p)}(x)x^{[c+p]+2}\right|~,
\end{equation}
which can be bounded remembering that $p\leq \alpha$ by 
\begin{equation}\label{1.14}
    2\max\left(\frac{1}{c+p},1\right)\sup_{1\leq p\leq \alpha }\left\|\phi^{(p)}\right\|_{\infty}+\sup_{\substack{x\in\mathbb{R}^+\\ 1\leq p\leq \alpha}}\left|\phi^{(p)}(x)x^{[c+\alpha]+2}\right|.
\end{equation}
Otherwise, we perform a Taylor expansion of $\phi^{(p)}$ around $0$ and we write 
\begin{multline}
   \mathcal{M}(\phi^{(p)})(c+p+it)=\int_0^1\left(\phi^{(p)}(x)-\sum_{n=0}^{N-1}\frac{\phi^{(p+n)}(0)}{n!} x^n\right)x^{c+p+it-1}dx+\sum_{n=0}^{N-1}\frac{\phi^{(p+n)}(0)}{n!(c+p+it+n)}\\+\int_1^{+\infty}\phi^{(p)}(x)~x^{c+p+it-1}dx,
\end{multline}
where the positive integer $N$ verifies $N+c+p>0$. Using  Taylor-Lagrange's inequality and remembering that $\phi\in\mathcal{S}(\mathbb{R}^+)$, we deduce
\begin{equation}
    \left|\mathcal{M}(\phi^{(p)})(c+p+it)\right|\leq \frac{\left\|\phi^{(p+N)}\right\|_{\infty}}{N+c+p}+\sum_{n=0}^{N-1}\frac{|\phi^{(p+n)}(0)|}{\sqrt{(c+p+n)^2+t^2}}+\frac{\sup_{\substack{x\in\mathbb{R}^+}}\left|x^m\phi^{(p)}(x)\right|}{\left|c+p-m\right|},~~\forall m\in\mathbb{N}.
\end{equation}
Hence, we obtain for $N=-[c+p]+2$ and $m=0$ the following bound
\begin{equation}\label{2.29}
\begin{aligned}
    \left|\mathcal{M}(\phi^{(p)})(c+p+it)\right|\leq &O(1)~\sup_{0\leq n\leq \alpha+[-c+2]}\left\|\phi^{(n)}\right\|_{\infty}\\
    &\left(-[c+p]+2\right)\max\left(\frac{1}{|c+p-[c+p]+1|},\frac{1}{|c+p-[c+p]|}\right).
    \end{aligned}
\end{equation}
Using \eqref{1.14} for the particular case of $c>0$, we deduce
\begin{multline}
    \left|t^{\alpha}\mathcal{M}(\phi)(c+it)\right|\leq O(1)~\left(\sum_{k=0}^{\alpha} c^{\alpha-k}\right)\\\left\{ \max\left(\frac{1}{c},1\right)\sup_{1\leq p\leq \alpha }\left\|\phi^{(p)}\right\|_{\infty}+\sup_{\substack{x\in\mathbb{R}^+\\ 1\leq p\leq \alpha}}\left|\phi^{(p)}(x)x^{[c+\alpha]+2}\right|\right\},
\end{multline}
where $O(1)$ denotes a positive constant that depends only on $\alpha$.
For $c\in\mathbb{R}^-\setminus\mathbb{Z}^-$, we combine \eqref{2.26}, \eqref{1.14} and \eqref{2.29} to deduce 
\begin{multline}
    \left|t^{\alpha}\mathcal{M}(\phi)\left(c+it\right)\right|\leq O(1)~\left(\sum_{k=0}^{\alpha} c^{\alpha-k}\right)\\ \left\{\sum_{p=0}^{\alpha}\left(-[c+p]+2\right)\max\left(\frac{1}{|c+p-[c+p]+1|},\frac{1}{|c+p-[c+p]|}\right)\sup_{0\leq n\leq \alpha+[-c+2]}\left\|\phi^{(n)}\right\|_{\infty}\right.\\\left. +\sup_{\substack{x\in\mathbb{R}^+\\ 1\leq n\leq \alpha}}\left|\phi^{(n)}(x)x^{[c+\alpha]+2}\right|\right\}.
\end{multline}
\end{itemize}
\end{proof}
\noindent In the proof of Theorem \ref{Thm1}, we need the fundamental Lemma of Riemann-Lebesgue in the frame of Mellin transforms, which in our case we consider for particular values of $\mathrm{Re}(s)\in\mathbb{Z}^-$:
\begin{lemma}
   If $\phi\in\mathcal{S}(\mathbb{R}^+)$, then 
   \begin{equation}
       \lim_{|t|\rightarrow +\infty}\mathcal{M}\left(\phi\right)\left(c+it\right)\longrightarrow 0,~~~~~\forall c\in\mathbb{R}^-\setminus\{0\}~.
   \end{equation}
   For $c=0$, we have 
   \begin{equation}\label{spe235}
       \lim_{t\rightarrow +\infty}t~\mathcal{M}\left(\phi\right)\left(it\right)\longrightarrow 0.
       \end{equation}
\end{lemma}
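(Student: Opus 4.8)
The plan is to avoid a direct Riemann--Lebesgue argument, which fails once $c<0$: after the substitution $u=\log x$ one has $\mathcal{M}(\phi)(c+it)=\int_{\mathbb{R}}\phi(e^u)e^{cu}e^{itu}\,du$, and the density $u\mapsto\phi(e^u)e^{cu}$ is not integrable near $u=-\infty$ for $c<0$, since it grows like $\phi(0)e^{cu}$. Instead I would reduce everything to the regime $\mathcal{R}(s)>0$, where point iii) of Proposition~\ref{prop1} already guarantees Schwartz decay in $t$, by exploiting the Mellin integration-by-parts relation. For $\mathcal{R}(s)>0$ the boundary terms in $\int_0^{+\infty}\phi'(x)x^{s}\,dx$ vanish (because $\phi\in\mathcal{S}(\mathbb{R}^+)$ kills the contribution at $+\infty$ while $x^{s}$ kills it at $0$), giving $\mathcal{M}(\phi')(s+1)=-s\,\mathcal{M}(\phi)(s)$; as both sides are meromorphic on $\mathbb{C}$ and agree on an open set, this identity persists everywhere by analytic continuation. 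Iterating yields, for every $k\in\mathbb{N}$,
\begin{equation}
\mathcal{M}(\phi)(s)=\frac{(-1)^k}{s(s+1)\cdots(s+k-1)}\,\mathcal{M}\big(\phi^{(k)}\big)(s+k).
\end{equation}

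With this relation the two statements follow quickly. For $c\in\mathbb{R}^-\setminus\{0\}$ I would pick any integer $k>-c$, so that $c+k>0$; then $\phi^{(k)}\in\mathcal{S}(\mathbb{R}^+)$ and point iii) of Proposition~\ref{prop1} makes $t\mapsto\mathcal{M}(\phi^{(k)})(c+k+it)$ a Schwartz function, in particular bounded by some constant $M$. Estimating the denominator by $\big|\prod_{j=0}^{k-1}(c+j+it)\big|\geq|t|^k$ gives $|\mathcal{M}(\phi)(c+it)|\leq M/|t|^k\to 0$ as $|t|\to+\infty$, which in particular covers the negative integers not treated in Proposition~\ref{prop1}. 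For $c=0$ I would use the relation with $k=1$ at $s=it$, namely $\mathcal{M}(\phi)(it)=-\tfrac{1}{it}\mathcal{M}(\phi')(1+it)$, so that $t\,\mathcal{M}(\phi)(it)=i\,\mathcal{M}(\phi')(1+it)$; since $c=1>0$, point iii) of Proposition~\ref{prop1} gives $\mathcal{M}(\phi')(1+i\cdot)\in\mathcal{S}(\mathbb{R})$, hence $\mathcal{M}(\phi')(1+it)\to 0$ and the claimed limit \eqref{spe235} follows.

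The main obstacle is the careful justification of the reduction relation as an identity of meromorphic functions rather than merely on the fundamental strip: one must verify the vanishing of the boundary terms in the integration by parts for $\mathcal{R}(s)>0$, and then invoke uniqueness of meromorphic continuation to propagate the identity across the poles of $\mathcal{M}(\phi)$ located on $\mathbb{Z}^-$. A secondary subtlety is that along the integer lines $c\in\mathbb{Z}^-$ both sides carry a pole at $t=0$ (coming from the factor $s+j$ with $c+j=0$), so the conclusion is really a statement about the meromorphic continuation away from $t=0$; this is harmless here, since only the regime $|t|\to+\infty$ is relevant, and it is precisely what allows the argument to reach the integer values excluded from Proposition~\ref{prop1}.
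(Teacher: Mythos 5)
Your proof is correct, and for the main case $c<0$ it takes a genuinely different route from the paper. The paper follows Butzer's translation-operator trick: it sets $h=e^{-\pi/t}$ so that $2\,\mathcal{M}(\phi)(s)=\mathcal{M}(\phi-\tau_h^c\phi)(s)$, Taylor-expands $\phi-\tau_h^c\phi$ with an integral remainder, and sends each piece to zero by dominated convergence. You instead reduce to the right half-plane via the functional equation $\mathcal{M}(\phi')(s+1)=-s\,\mathcal{M}(\phi)(s)$, extended across the poles by the identity theorem (legitimate, since Proposition~\ref{prop1}~ii) supplies meromorphy of both sides and they agree on $St(0,\infty)$; one can even check the residues match, $\mathrm{Res}_{s=-n}$ giving $\phi^{(n)}(0)/(n-1)!$ on both sides), and then invoke Proposition~\ref{prop1}~iii) for $\mathcal{M}(\phi^{(k)})$ on the line $\mathcal{R}(s)=c+k>0$ together with the elementary bound $\bigl|\prod_{j=0}^{k-1}(c+j+it)\bigr|\geq |t|^k$. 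This is shorter and yields a strictly stronger conclusion: rapid decay $O(|t|^{-k})$ for every $k$, valid on every vertical line $c<0$ \emph{including} $c\in\mathbb{Z}^-$, which sharpens the paper's Proposition~\ref{prop1}~iii) precisely where it excluded the integer lines (your observation that the pole at $t=0$ sits in the explicit prefactor, hence is irrelevant as $|t|\to\infty$, is the right way to see why the exclusion disappears). The trade-off is that your argument leans on the continuation machinery already built in Proposition~\ref{prop1}, whereas the paper's dilation argument is self-contained in the Riemann--Lebesgue tradition and would survive in settings where derivatives of $\phi$ are not controlled. Your iterated identity is, in spirit, the $\alpha$-fold integration-by-parts relation \eqref{2.26} that the paper itself derives inside the proof of Proposition~\ref{prop1}~iii), so your proof can be seen as reusing that mechanism directly instead of introducing $\tau_h^c$. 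For $c=0$ your argument ($t\,\mathcal{M}(\phi)(it)=i\,\mathcal{M}(\phi')(1+it)$, then Proposition~\ref{prop1}~iii)) coincides exactly with the paper's.
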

\begin{proof}
\begin{itemize}
   \item We proceed as in \cite{Butzer1997} by using the same technique that proves the Lebesgue-Riemann Lemma by introducing the Mellin translation operator $\tau^c_{h}$ for $\phi:\mathbb{R}^+\rightarrow \mathbb{C}$, $c\in\mathbb{R}$ and $h\in\mathbb{R}^+$ defined by
    \begin{equation}
        \left(\tau_h^c\phi\right)(x):=h^c\phi(hx),~~~~x\in\mathbb{R}^+.
    \end{equation}
    For $s=c+it$, $t\in\mathbb{R}^*$ and $h:=e^{-\frac{\pi}{t}}$, we have 
    \begin{equation}
        -\mathcal{M}(\phi)(s)=e^{i\pi}\mathcal{M}(\phi)(s)=h^{-it}\mathcal{M}(\phi)(s)=\mathcal{M}(\tau_h^c\phi)(s).
    \end{equation}
    Using the linearity of the Mellin transform, we obtain 
    \begin{equation}\label{spe237}
        2~\mathcal{M}(\phi)(s)=\mathcal{M}(\phi)(s)-\mathcal{M}(\tau_h^c\phi)(s)=\mathcal{M}(\phi-\tau_h^c\phi)(s).
    \end{equation}
    We write 
    $$\Phi_{h,c}:=\phi-\tau_h^c\phi.$$
    Performing a Taylor expansion of $\Phi_{h,c}$ around $0$, we obtain for $c<0$
    \begin{multline}
   \mathcal{M}(\Phi_{h,c})(c+it)=\int_0^1\left(\int_0^1\frac{(1-t)^{N-1}}{(N-1)!}\partial^N_t\Phi_{h,c}(tx)dt\right)x^{c+it-1}dx+\sum_{n=0}^{N-1}\frac{\Phi_{h,c}^{(n)}(0)}{n!(c+it+n)}\\+\int_1^{+\infty}\Phi_{h,c}(x)~x^{c+it-1}dx,
\end{multline}
where $N$ is a positive integer fixed such that $N+c>1$. We have 
\begin{equation}\label{spe1}
    \Phi_{h,c}^{(n)}(0)=\left(1-h^{c+n}\right)\phi^{(n)}(0).
\end{equation}
Since $h^{c+n}=e^{-\frac{\pi(c+n)}{t}}\xrightarrow[|t|\rightarrow \infty]{}1$, we deduce that \eqref{spe1} converges to $0$ if $|t|\rightarrow \infty$.\\
We rewrite  
\begin{equation}\label{spe2}
   \int_1^{+\infty}\Phi_{h,c}(x)~x^{c+it-1}dx~,
\end{equation}
as follows 
\begin{equation}\label{spe241}
    \left(1-h^c\right)\int_1^{+\infty}\phi(x)x^{c+it-1}dx+h^c\left(\int_1^{+\infty}\left(\phi(x)-\phi(hx)\right)x^{c+it-1}dx\right).
\end{equation}
The first term on the right-hand side converges to $0$ if $|t|\rightarrow \infty$. For the second term, we have 
\begin{equation}
   \left|\phi(x)-\phi(hx)\right|x^{c-1}\leq 2\left\|\phi\right\|_{\infty}x^{c-1},
\end{equation}
Since $\phi$ is in $\mathcal{S}(\mathbb{R}^+)$, then $\phi(hx)$ converges pointwise to $\phi(x)$ if $|t|\rightarrow \infty$. By Lebesgue's dominated convergence theorem we obtain 
\begin{equation}
    h^c\left(\int_1^{+\infty}\left(\phi(x)-\phi(hx)\right)x^{c+it-1}dx\right)\xrightarrow[|t|\rightarrow \infty]{} 0.
\end{equation}
Hence, \eqref{spe2} converges to $0$. In order to conclude that 
\begin{equation}\label{spe244}
    \mathcal{M}(\Phi_{h,c})(c+it)\xrightarrow[|t|\rightarrow\infty]{} 0,
\end{equation}
we need to establish that 
\begin{equation}\label{spe245}
    \int_0^1\left(\int_0^1\frac{(1-t)^{N-1}}{(N-1)!}\partial^N_t\Phi_{h,c}(tx)dt\right)x^{c+it-1}dx\xrightarrow[|t|\rightarrow \infty]{}0.
\end{equation}
We have
\begin{equation}\label{spe246}
    \int_0^1{(1-t)^{N-1}}\partial_t^N\Phi_{h,c}(tx)~dt= \int_0^1{(1-t)^{N-1}}x^{N}\left(\phi^{(N)}(tx)-h^{N+c} \phi^{(N)}(thx)\right)dt.
\end{equation}
Proceeding as in \eqref{spe241}, we decompose \eqref{spe246} as follows
\begin{equation}\label{spe247}
\left(1-h^{N+c}\right)x^N \int_0^1{(1-t)^{N-1}}\phi^{(N)}(tx)~dt+h^{N+c} x^N\int_0^1{(1-t)^{N-1}}\left(\phi^{(N)}(tx)-\phi^{(N)}(thx)\right)dt.
\end{equation}
Choosing $N$ such that $N+c-1>1$ and using the bound  
\begin{equation}
    \left|\int_0^1\frac{(1-t)^{N-1}}{(N-1)!}\phi^{(N)}(tx)~dt\right|\leq \frac{\|\phi^{(N)}\|_{\infty}}{N!}~,
\end{equation}
we deduce that 
\begin{equation}
    \left|\int_0^1{\frac{{(1-t)^{N-1}}}{{(N-1)!}}}\left(\phi^{(N)}(tx)-\phi^{(N)}(thx)\right)dt\right|\leq 2\frac{\|\phi^{(N)}\|_{\infty}}{N!}.
\end{equation}
Again by the Lebesgue's dominated convergence theorem, we obtain 
\begin{equation}
    \int_0^1\int_0^1\frac{(1-t)^{N-1}}{(N-1)!}\left(\phi^{(N)}(tx)-\phi^{(N)}(thx)\right)dt~x^{c+it-1{+N}}dx\xrightarrow[|t|\rightarrow\infty]{}0.
\end{equation}
This together with \eqref{spe247} imply \eqref{spe245}. Combining \eqref{spe244} with \eqref{spe237}, we obtain 
\begin{equation}
    \mathcal{M}(\phi)(c+it)\xrightarrow[|t|\rightarrow\infty]{}0,~~~~~\forall c\in\mathbb{R}^-\setminus\{0\}.
\end{equation}
\item For $c=0$ and $t\in\mathbb{R}^*$, the Mellin transform $\mathcal{M}(\phi)(c+it)$ is well-defined through \eqref{Taylor2.8}. Therefore, we write 
\begin{align}
    t~\mathcal{M}(\phi)(it)&=-i\int_{\mathbb{R}}\phi(e^u)\partial_u(e^{itu})du\nonumber\\
    &=i\int_{\mathbb{R}}\phi'(e^{u}) e^u e^{itu}du\nonumber\\
    &=i~\mathcal{M}(\phi')\left(1+it\right).\label{UpropU}
\end{align}
By proposition \ref{prop1}, \eqref{UpropU} converges to $0$ if $|t|\rightarrow +\infty$. This proves \eqref{spe235}.
\end{itemize}
\end{proof}
\subsection{The Inversion Formula for the Mellin Transform}
In this subsection, we prove the inversion formula of the Mellin transform for functions in the Schwartz space $\mathcal{S}(\mathbb{R}^+)$. Before stating the inversion Mellin formula Theorem, we need the following Lemma:
\begin{lemma}\label{lemme1}
    For $f$ in $\mathcal{S}(\mathbb{R}^+)$ and $c>0$, we have 
    \begin{equation}\label{Bp6}
        e^{cx}f(e^x)\in \mathcal{S}(\mathbb{R})~.
    \end{equation}
\end{lemma}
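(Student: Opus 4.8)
The plan is to verify directly the defining seminorm bounds for membership in $\mathcal{S}(\mathbb{R})$: writing $g(x):=e^{cx}f(e^x)$, I would establish that $\sup_{x\in\mathbb{R}}|x^{\alpha}g^{(\beta)}(x)|<+\infty$ for every $(\alpha,\beta)\in\mathbb{N}^2$. Since $f\in\mathcal{S}(\mathbb{R}^+)$ is the restriction to $\mathbb{R}^+$ of a function in $\mathcal{S}(\mathbb{R})$ and the argument $e^x$ is always strictly positive, $g$ is smooth on all of $\mathbb{R}$ and no difficulty arises at the origin despite the characteristic function in the definition of $\mathcal{S}(\mathbb{R}^+)$.

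First I would compute the derivatives of $g$. Combining the Leibniz rule for the product $e^{cx}\cdot f(e^x)$ with the chain-rule expansion \eqref{hope217}, namely $\partial_x^{k}\big(f(e^x)\big)=\sum_{p=1}^{k}\alpha_k(p)\,e^{px}f^{(p)}(e^x)$, one finds that $g^{(\beta)}$ is a finite linear combination
\begin{equation*}
g^{(\beta)}(x)=\sum_{p=0}^{\beta}\lambda_{\beta,p}(c)\,e^{(c+p)x}f^{(p)}(e^x),
\end{equation*}
with constants $\lambda_{\beta,p}(c)$ depending only on $\beta$, $p$ and $c$ (the term $p=0$ coming from differentiating only the exponential prefactor). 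It therefore suffices to bound $x^{\alpha}e^{(c+p)x}f^{(p)}(e^x)$ uniformly in $x$ for each fixed $0\le p\le\beta$.

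The estimate I would split according to the sign of $x$. For $x\le 0$ we have $0<e^x\le 1$, so $|f^{(p)}(e^x)|\le\|f^{(p)}\|_{\infty}$; since $c+p>0$, the factor $|x|^{\alpha}e^{(c+p)x}$ is bounded on $(-\infty,0]$ (polynomial growth against exponential decay as $x\to-\infty$), which controls this half-line. For $x\ge 0$ we have $e^x\ge 1$, and the rapid decrease of $f^{(p)}$ gives, for every integer $M$, a constant $C_M$ with $|f^{(p)}(e^x)|\le C_M e^{-Mx}$; choosing $M>c+p$ renders $x^{\alpha}e^{(c+p-M)x}$ bounded on $[0,+\infty)$. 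Summing the finitely many terms yields the desired seminorm bound and hence $g\in\mathcal{S}(\mathbb{R})$.

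The computation is essentially routine; the only point demanding care is to see that the Schwartz property of $f$ on $\mathbb{R}^+$ supplies \emph{two distinct} kinds of control on the two half-lines — boundedness of all derivatives near $e^x=0$ (governing $x\to-\infty$) and rapid decay of all derivatives as $e^x\to+\infty$ (governing $x\to+\infty$) — with the factor $e^{cx}$ and the hypothesis $c>0$ precisely absorbing the polynomial weight $x^{\alpha}$ in the first regime while being harmlessly dominated by the rapid decay in the second.
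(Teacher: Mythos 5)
Your proof is correct and follows essentially the same route as the paper's: split at $x=0$, use boundedness of $f^{(p)}$ on $(0,1]$ together with $c+p>0$ to absorb the polynomial weight as $x\to-\infty$, and use the rapid decay of $f^{(p)}$ at infinity to dominate $e^{(c+p)x}x^{\alpha}$ as $x\to+\infty$. If anything, your write-up is more complete than the paper's, which only bounds $x^{n}e^{cx}f(e^{x})$ itself (the $\beta=0$ seminorms) and leaves the derivative seminorms implicit, whereas you explicitly reduce $g^{(\beta)}$ to the finite sum $\sum_{p=0}^{\beta}\lambda_{\beta,p}(c)\,e^{(c+p)x}f^{(p)}(e^{x})$ via \eqref{hope217} and bound each term.
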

\begin{proof} 
 Let $f$ in $\mathcal{S}(\mathbb{R}^+)$, $c>0$ and $n\in\mathbb{N}$, we bound uniformly 
    $$x^n e^{cx}f(e^x).$$
    If $x\leq 0$, then clearly we have 
    \begin{equation}\label{Bp7}
    \left|x^n e^{cx}f(e^x)\right|\leq \sup_{x\in\mathbb{R}^+}x^n e^{-cx}~\|f\|_{\infty}.
    \end{equation}
    For $x>0$, we have
    \begin{equation}\label{Bp8}
        \left|x^n e^{cx}f(e^x)\right|\leq  e^{(c+1)x}f(e^x) \leq \sup_{x\in\mathbb{R}^+}\left|x^{[c]+2}f(x)\right|~.
    \end{equation}
    Combining \eqref{Bp7} and \eqref{Bp8} gives \eqref{Bp6}.
  \end{proof}
\begin{theorem}\label{Thm1}(Inversion formula)
    Given $\phi$ in $\mathcal{S}(\mathbb{R}^+)$ and $\mathcal{M}(\phi)$ its Mellin transform. We have 
    \begin{equation}
        \phi(x)=\frac{x^{-c}}{2\pi}\int_{\mathbb{R}}\mathcal{M}(\phi)(c+it)~x^{-it}~dt,~~\forall c>0,~~\forall x>0.
    \end{equation}
\end{theorem}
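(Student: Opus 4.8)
The plan is to reduce the Mellin inversion formula to the classical Fourier inversion formula on $\mathcal{S}(\mathbb{R})$ by means of the logarithmic change of variables $x = e^u$, exactly the substitution that underlies Lemma~\ref{lemme1}. The point is that under this substitution the Mellin transform on the vertical line $\mathcal{R}(s) = c$ becomes, up to a reflection of the transform variable, the Fourier transform of the function $g(u) := e^{cu}\phi(e^u)$, and Lemma~\ref{lemme1} guarantees precisely that $g \in \mathcal{S}(\mathbb{R})$, so that the full strength of Fourier inversion on the Schwartz space is available.

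Concretely, I would proceed in four steps. First, starting from the definition $\mathcal{M}(\phi)(c+it) = \int_{\mathbb{R}^+}\phi(x)x^{c+it-1}dx$, substitute $x = e^u$ to obtain
\begin{equation}
    \mathcal{M}(\phi)(c+it) = \int_{\mathbb{R}} e^{cu}\phi(e^u)\, e^{itu}\, du = \mathcal{F}(g)(-t),
\end{equation}
where $g(u) = e^{cu}\phi(e^u)$ and $\mathcal{F}$ is the Fourier transform in the paper's convention. Second, invoke Lemma~\ref{lemme1} to record that $g \in \mathcal{S}(\mathbb{R})$ for every $c > 0$; this is what makes the manipulation legitimate, since then $\mathcal{F}(g) \in \mathcal{S}(\mathbb{R})$ and the Fourier inversion formula holds pointwise for every $u$. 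Third, apply Fourier inversion $g = \mathcal{F}^{-1}(\mathcal{F}(g))$ and unfold the definitions to get
\begin{equation}
    e^{cu}\phi(e^u) = \frac{1}{2\pi}\int_{\mathbb{R}}\mathcal{F}(g)(\xi)\, e^{iu\xi}\, d\xi = \frac{1}{2\pi}\int_{\mathbb{R}}\mathcal{M}(\phi)(c+it)\, e^{-iut}\, dt,
\end{equation}
where the last equality uses the reflection $\xi = -t$. Fourth, substitute back $x = e^u$, so that $e^{cu} = x^c$ and $e^{-iut} = x^{-it}$, and divide by $x^c$ to reach the claimed identity.

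The argument is essentially bookkeeping once Lemma~\ref{lemme1} is in hand, so the main point requiring care is not analytic convergence but the consistent tracking of the Fourier-transform conventions and the sign flip $\xi \mapsto -t$, which is exactly what produces the factor $x^{-it}$ (rather than $x^{it}$) on the right-hand side. Convergence of the inverse integral is free: by part iii) of Proposition~\ref{prop1}, $\mathcal{M}(\phi)(c+i\cdot) \in \mathcal{S}(\mathbb{R})$ for $c > 0$, so $t \mapsto \mathcal{M}(\phi)(c+it)x^{-it}$ is absolutely integrable and the right-hand side defines a genuine function of $x$. Thus the only genuine input beyond elementary substitution is the membership $g \in \mathcal{S}(\mathbb{R})$ supplied by Lemma~\ref{lemme1}, which upgrades the formal computation to a rigorous pointwise identity valid for all $x > 0$.
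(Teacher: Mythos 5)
Your proof is correct and follows essentially the same route as the paper's: the logarithmic substitution $x=e^u$, the membership $g(u)=e^{cu}\phi(e^u)\in\mathcal{S}(\mathbb{R})$ from Lemma~\ref{lemme1}, and pointwise Fourier inversion on the Schwartz space. If anything, your bookkeeping of the reflection $\xi=-t$ is more careful than the paper's own display (which contains a stray $\sqrt{2\pi}$ and an unexplained sign), so no changes are needed.
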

\begin{proof}
    Let $\phi$ in $\mathcal{S}(\mathbb{R}^+)$ and $c>0$. From proposition \ref{prop1}, the Mellin transform
    \begin{equation}\label{B4}
        \int_{\mathbb{R}^+}\phi(u)~u^{s-1}~du
    \end{equation}
     is well-defined. Using (\ref{B4}) we write for $x>0$
     \begin{align*}
         \frac{x^{-c}}{2\pi}\int_{\mathbb{R}}\mathcal{M}(\phi)(c+it)~x^{-it}~dt&=\frac{x^{-c}}{2\pi }\int_{\mathbb{R}}\int_{\mathbb{R}^+}\phi(u)~u^{c+it-1}~du~x^{-it}~dt\\
         &=\frac{x^{-c}}{2\pi}\int_{\mathbb{R}}\int_{\mathbb{R}}\phi\left(e^{w}\right)~e^{cw}~e^{itw}~dw~x^{-it}~dt,
     \end{align*}
     where we perform the change of variable $u=e^{w}$. Using Lemma \ref{lemme1}, we obtain 
     \begin{equation}\label{B5}
         \frac{x^{-c}}{2\pi}\int_{\mathbb{R}}\int_{\mathbb{R}}\phi\left(e^{w}\right)~e^{cw}~e^{itw}~dw~x^{-it}~dt=\frac{x^{-c}}{\sqrt{2\pi} }\int_{\mathbb{R}}\mathcal{F}\left(g\right)(t)~e^{-it\log x}~dt,
     \end{equation}
     where $g(v):=e^{cv}\phi(e^v)$. Using the Fourier inversion integral, we deduce that the right-hand side of \eqref{B5} is equal to 
     \begin{equation}
         {x^{-c}}g(\log x)=\phi(x).
     \end{equation}
     This ends the proof of Theorem \ref{Thm1}.
\end{proof}
The method of proof of Theorem \ref{Thm1} does not hold if one considers the general case of the space $X_c$ defined as
\begin{equation}
    X_c:=\left\{f\left|\right.~ \int_{\mathbb{R}^+}x^c \left|f(x)\right|dx~<+\infty~.\right\}
\end{equation}
For the general case, the proof of the inversion Mellin formula is based on approximation theory, in particular the properties of the Mellin-Gauss-Weistrass kernel. For further details, we refer the reader to~\cite{Butzer1997}. In the context of this paper we have for all $c>0$ 
$$\mathcal{S}(\mathbb{R}^+)\subset X_c,$$
which simplifies considerably the proof of Theorem \ref{Thm1}.
\subsection{The Mellin transform of the Schwartz space $\mathcal{S}(\mathbb{R}^+)$}
\indent Before stating the main result of this section, we introduce some definitions that will be used in the sequel.
\begin{definition}(Singular expansion) Let $\Phi(s)$ be meromorphic in some area $\Omega$ with $S$ including all the poles of $\Phi$ in $\Omega$. A singular expansion of $\Phi(s)$ in $\Omega$ is a formal sum of singular elements of $\Phi(s)$ at all points of $S$. We note it as follows:
$$\Phi(s)\asymp E.$$
\end{definition}
\noindent Note that the singular expansion is only a formal sum, which only shows the poles of the function and its singularities.\\
\indent Let us now define the space $\mathcal{M}^+$ of functions with a fundamental strip in $St(0,+\infty)$ and a meromorphic continuation to the strip $St(-\infty,0) $ such that for all $N\in\mathbb{N}^{0}$, a given function $\psi$ in $\mathcal{M}^+$ admits the singular expansion for $s\in St(-N,0)$
$$\psi(s) \asymp \sum_{n=0}^{N-1}\frac{a_n}{s+n}~.$$
\indent We also define the sub-space $\tilde{\mathcal{M}}^+$ as follows 
\begin{multline}\label{Mtilde+}
    \tilde{\mathcal{M}}^+:=\left\{\psi\in\mathcal{M}^+\left|\right.\forall c>0:\psi\left(c+i\cdot\right)\in\mathcal{S}(\mathbb{R});\right.\\\left.~\forall \alpha\in\mathbb N, \forall c\in\mathbb R^-\setminus \mathbb Z^-:\sup_{t\in\mathbb{R}}|t^\alpha\psi(c+it)|<+\infty ~\right\}.
\end{multline}
Our main result is summarized in the following Theorem:
\begin{theorem}\label{mainres}
    The map
    \begin{align}\label{sp263}
        \mathcal{M}:\mathcal{S}(\mathbb{R}^+)&\longrightarrow \tilde{\mathcal{M}}^+\nonumber\\
        \phi&\longmapsto \mathcal{M}(\phi)(c+it):=\int_{\mathbb{R}^+}\phi(x) ~x^{c+it}~dx
    \end{align}
    is an isomorphism with the following inverse map 
    \begin{align}\label{B16}
        \mathcal{M}^{-1}:\tilde{\mathcal{M}}^+&\longrightarrow \mathcal{S}(\mathbb{R}^+)\nonumber\\
        \psi&\longmapsto \left\{
\begin{array}{rl}
  \mathcal{M}^{-1}(\psi)(x)&=\frac{x^{-c}}{2\pi }\int_{\mathbb{R}}\psi\left(c+it\right)x^{-it}dt,~~\forall c>0~\mathrm{and~}x>0\\
\partial^{k}\mathcal{M}^{-1}(\psi)(0) &= k!~a_k,~~~\forall k\in\mathbb{N},
\end{array}
\right.
    \end{align}
where $a_k$ is the residue of $\psi$ at $s=-k$.
\end{theorem}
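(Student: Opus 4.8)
The plan is to verify the three ingredients of an isomorphism separately: that $\mathcal{M}$ lands in $\tilde{\mathcal M}^+$, that it is injective with the first line of \eqref{B16} as a left inverse, and that it is surjective with \eqref{B16} as a genuine two-sided inverse. The first two are light consequences of Proposition~\ref{prop1} and Theorem~\ref{Thm1}; the entire difficulty sits in surjectivity, i.e. in showing that the inverse Mellin integral of an arbitrary $\psi\in\tilde{\mathcal M}^+$ actually produces a function in $\mathcal S(\mathbb R^+)$ whose Mellin transform is $\psi$.

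First I would dispatch well-definedness and injectivity. For $\phi\in\mathcal S(\mathbb R^+)$, Proposition~\ref{prop1}(i)--(iii) gives precisely the membership conditions in \eqref{Mtilde+}: the fundamental strip is $St(0,+\infty)$, the meromorphic continuation has simple poles only at $s=-n$ with residues $a_n=\phi^{(n)}(0)/n!$ so that the singular expansion is exactly $\sum_{n=0}^{N-1}a_n/(s+n)$, the restriction to any line $\mathrm{Re}(s)=c>0$ is Schwartz, and $t^\alpha\mathcal M(\phi)(c+it)$ is uniformly bounded for $c\in\mathbb R^-\setminus\mathbb Z^-$. Hence $\mathcal M(\phi)\in\tilde{\mathcal M}^+$. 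Injectivity and the left-inverse property are immediate from the inversion formula (Theorem~\ref{Thm1}): the first line of \eqref{B16} returns $\phi$, and the residue of $\mathcal M(\phi)$ at $s=-k$ is $a_k=\phi^{(k)}(0)/k!$, consistent with the derivative prescription in the second line.

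The core of the argument is surjectivity. Given $\psi\in\tilde{\mathcal M}^+$, I would define $\phi$ on $(0,+\infty)$ by the integral in \eqref{B16} and proceed in three steps. (a) Independence of $c>0$: since $\psi$ is holomorphic on $St(0,+\infty)$ with $\psi(c+i\cdot)\in\mathcal S(\mathbb R)$, shifting the vertical contour inside the strip and letting the height tend to infinity leaves the integral unchanged, the horizontal pieces vanishing by the decay of $\psi$. (b) Smoothness and decay at infinity: differentiating under the integral brings down a degree-$\beta$ polynomial in $t$, still integrable against the Schwartz function $\psi(c+i\cdot)$, so $\phi\in C^\infty(0,\infty)$ with $|\phi^{(\beta)}(x)|\le C_{c,\beta}\,x^{-c-\beta}$; letting $c\to+\infty$ yields rapid decay of every derivative as $x\to\infty$. (c) Behaviour at the origin: for fixed $N$, I would shift the contour leftward to a line $\mathrm{Re}(s)=c'\in(-N,-(N-1))$, collecting the simple poles at $s=0,-1,\dots,-(N-1)$; since the residue of $\psi$ at $s=-n$ is $a_n$ and $\mathrm{Res}_{s=-n}[\psi(s)x^{-s}]=a_n x^n$, this gives $\phi(x)=\sum_{n=0}^{N-1}a_n x^n+R_N(x)$, where the uniform bounds $\sup_t|t^\alpha\psi(c'+it)|<\infty$ force $R_N^{(k)}(x)=O(x^{-c'-k})=o(x^{N-1-k})$ as $x\to0$ for $k\le N-1$. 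As $N$ is arbitrary, $\phi$ extends to a $C^\infty$ function at $0$ with $\phi^{(k)}(0)=k!\,a_k$, matching the second line of \eqref{B16}; together with step (b) this shows $\phi\in\mathcal S(\mathbb R^+)$.

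It then remains to check $\mathcal M(\phi)=\psi$. Writing $g(v):=e^{cv}\phi(e^v)$, which lies in $\mathcal S(\mathbb R)$ by Lemma~\ref{lemme1} (equivalently, directly as the inverse Fourier transform of the Schwartz function $\psi(c+i\cdot)$), the same Fubini--Fourier computation as in the proof of Theorem~\ref{Thm1}, run in reverse, gives $\mathcal M(\phi)(c+it)=\mathcal F(g)(-t)=\psi(c+it)$ on $St(0,+\infty)$; the two sides being meromorphic, they agree everywhere by uniqueness of analytic continuation. I expect the main obstacle to be step (c): establishing the precise dictionary between the singular expansion of $\psi$ and the Taylor expansion of $\phi$ at $0$, and in particular justifying that the horizontal segments vanish in both the leftward and rightward contour shifts. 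The decay is only assumed on the individual lines $\mathrm{Re}(s)=c$; upgrading it to decay uniform across the strip, so that the horizontal integrals die as the height grows, is the delicate point, handled by the uniform-in-$\alpha$ bounds in \eqref{Mtilde+} supplemented, if needed, by a Phragm\'{e}n--Lindel\"{o}f interpolation between adjacent lines.
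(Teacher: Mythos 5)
Your proposal is correct and follows essentially the same route as the paper: membership of $\mathcal M(\phi)$ in $\tilde{\mathcal{M}}^+$ via Proposition~\ref{prop1}, Theorem~\ref{Thm1} for the composition $\mathcal M^{-1}\mathcal M=\mathrm{id}$, a leftward rectangular contour shift collecting the residues $a_n x^n$ to control $\mathcal M^{-1}(\psi)$ and its derivatives at the origin, differentiation under the integral with the contour pushed to the right for smoothness and rapid decay at infinity, and a Fubini--Fourier computation for $\mathcal M\mathcal M^{-1}\psi=\psi$. The subtlety you flag about the horizontal segments is exactly the point the paper treats tersely (it bounds the segment by $\sup_{c\in(\gamma,\alpha)}\left|\psi(c+iT)\right|$, attained at some $\tilde c$ depending on $T$, and then invokes pointwise decay), so your Phragm\'en--Lindel\"of remark is a reasonable refinement of the same argument rather than a different approach.
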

\begin{proof}
\eqref{sp263} is well-defined by proposition \ref{prop1}. Now, we need to verify that the map \eqref{B16} is also well-defined, that is for $\psi$ in $\tilde{\mathcal{M}}^+$ we have 
\begin{equation}\label{BE265}
    \mathcal{M}^{-1}(\psi)\in\mathcal{S}(\mathbb{R}^+).
\end{equation}
\begin{itemize}
\item First, we establish that $\mathcal{M}^{-1}(\psi)$ is infinitely differentiable on $\mathbb{R}^+$. We start by studying the behaviour of the function 
\begin{equation}
   \mathcal{M}^{-1}(\psi)(x)=\frac{x^{-c}}{2\pi }\int_{\mathbb{R}}\psi\left(c+it\right)x^{-it}dt
\end{equation}
as $x\rightarrow 0$. Given $N\in\mathbb{N}^*$, the set of poles in $St(-N,+\infty)$ will be denoted in the sequel as ${S}_N$. Given $\alpha>0$, we define the rectangular contour $\mathcal{C}(T)$ by the segments 
\begin{equation}
    \left[\alpha-iT,\alpha+iT\right],~~~\left[\alpha+iT,\gamma+iT\right],~~~\left[\gamma+iT,\gamma-iT\right],~~~\left[\gamma-iT,\alpha-iT\right]
\end{equation}
with $\gamma:=-N-\frac{1}{2}$. Let us consider the integral
\begin{equation}
    \mathcal{I}\left(T\right):=\frac{1}{2\pi i}\int_{\mathcal{C}(T)}\psi(s)~x^{-s}ds~.
\end{equation}
For\footnote{$T$ must be larger than $\mathrm{Im}(s)$ for all $s\in S_N$. In this case, $\mathrm{Im}(s)=0$} $T>0$, remembering that $\psi\in\tilde{\mathcal{M}}^+$, we obtain by Cauchy's theorem 
\begin{equation}\label{B6}
    \mathcal{I}(T)=\sum_{n=0}^{N-1} a_n~\mathrm{Res}\left(\frac{x^{-s}}{s+n}\right)=\sum_{n=0}^{N-1} a_n x^n.
\end{equation}
Using Theorem \ref{Thm1} together with the fact that $\psi$ is holomorphic in $St(0,+\infty)$, we obtain for $\alpha>0$ and $x>0$
\begin{equation}\label{B7}
   \lim_{T\rightarrow +\infty} \frac{1}{2\pi i}\int_{\alpha-iT}^{\alpha+ iT}\psi(\alpha+it)~x^{-\alpha-it}dt= \mathcal{M}^{-1}(\psi)(x)~.
\end{equation}
Now, we consider the term 
\begin{equation}
    \int_{\alpha\pm iT}^{\gamma\pm iT}\psi(s)x^{-s}ds.
\end{equation}
For $0<x<\eta$ with $\eta>0$ , we have 
\begin{equation}\label{B9}
    \left|\int_{\alpha+iT}^{\gamma+iT}\psi(s)x^{-s}ds\right|\leq  \sup_{\substack{c\in(\gamma,\alpha)}}\left|\psi\left(c+iT\right)\right|~\frac{\left|x^{-\gamma}-x^{-\alpha}\right|}{|\log x|}~.
\end{equation}
For $T\in\mathbb{R}^*$, $\psi$ is continuous w.r.t. the variable $c\in\mathbb{R}$ and this implies for fixed $T$ in $\mathbb{R}^*$ that 
\begin{equation}
   \exists \tilde{c}\in [\gamma,\alpha]:~~~~\left|\psi\left(\tilde{c}+iT\right)\right|:=\sup_{\substack{c\in(\gamma,\alpha)}}\left|\psi\left(c+iT\right)\right|
\end{equation}
By proposition \ref{prop1} and lemma \ref{lemme1}, we deduce that 
\begin{equation}
    \lim_{T\rightarrow +\infty}\psi\left(\tilde{c}+iT\right)=0,
\end{equation}
which combined with \eqref{B9} gives  
\begin{equation}\label{B274}
    \lim_{T\rightarrow +\infty}\int_{\alpha+iT}^{\gamma+iT}\left|\psi(s)\right|x^{-s}ds=0~,~~~\forall 0<x<\eta~.
\end{equation}
Without loss of generality, we choose $\eta=1$. Similar arguments lead to 
\begin{equation}\label{B275}
    \lim_{T\rightarrow +\infty}\int_{\alpha-iT}^{\gamma-iT}\left|\psi(s)\right|x^{-s}ds=0~,~~~\forall 0<x<1~.
\end{equation}
Combining (\ref{B6}), (\ref{B274}) and \eqref{B275} we deduce for all $0<x<1$ and $N\in \mathbb{N}^*$ that 
\begin{equation}\label{BE276}
\mathcal{M}^{-1}(\psi)(x)=\sum_{n=0}^{N-1}a_n x^n-\frac{1}{2\pi }\int_{\mathbb{R}}\psi\left(\gamma+it\right)x^{-\gamma-it}dt.
\end{equation}
Using that 
\begin{equation}
    \left|\psi\left(\gamma+it\right)x^{-it}\right|\leq \left|\psi\left(\gamma+it\right)\right|
\end{equation}
together with 
\begin{equation}\label{BE277}
\int_{\mathbb{R}}|t|^{\beta}\left|\psi\left(\gamma+it\right)\right|dt\leq 2\pi\sup_{\substack{0\leq \alpha\leq 2\\t\in\mathbb {R}}} \left|t^{\alpha+\beta}\psi(\gamma+it)\right|<+\infty,
\end{equation}
we deduce that 
\begin{equation}
    \int_{\mathbb{R}}\psi\left(\gamma+it\right)x^{-it}dt
\end{equation}
is infinitely differentiable differentiable w.r.t. $x$. Hence, we obtain\footnote{Remember that $\gamma=-N-\frac{1}{2}$.} for all $N\geq 2$ and $0<x<1$ 
\begin{equation}\label{BE279}
    \partial_x^{N-1} \mathcal{M}^{-1}(\psi)(x)=(N-1)! a_{N-1}-x^{\frac{3}{2}}\int_{\mathbb{R}}\prod_{k=0}^{N-2}\left(N+\frac{1}{2}-it-k\right)\psi\left(\gamma+it\right)x^{-it}dt.
\end{equation}
This establishes that $\mathcal{M}^{-1}(\psi)$ is infinitely differentiable for $x\in(0,1)$. Now let us verify that it is also the case for $x=0$. Again using the bound \eqref{BE277}, we deduce that the second term on the right-hand side of \eqref{BE279} vanishes if $x\rightarrow 0$ which implies that 
\begin{equation}\label{B1}
    \lim_{x\rightarrow 0}\partial_x^{N-1} \mathcal{M}^{-1}(\psi)(x)=(N-1)! a_{N-1}=\partial_x^{N-1} \mathcal{M}^{-1}(\psi)(0),~~~\forall N\geq 2.
\end{equation}
Using \eqref{BE276}, we deduce that  
\begin{equation}\label{B2}
    \lim_{x\rightarrow 0} \mathcal{M}^{-1}(\psi)(x)=a_{0}= \mathcal{M}^{-1}(\psi)(0).
\end{equation}
\eqref{B1} together with \eqref{B2} prove that $\mathcal{M}^{-1}(\psi)$ is infinitely differentiable at $x=0$.  \\
\item For $x>0$, we have 
\begin{equation}\label{B11}
    \phi(x)=\frac{x^{-c}}{2\pi }\int_{\mathbb{R}}dt~\psi(c+it)x^{-it}.
\end{equation}
(\ref{B11}) does not depend on $c>0$ according to proposition 5 of~\cite{Butzer1997} with the fundamental strip $St(0,\infty)$. Hence, we write 
\begin{equation}
    \phi(x)=\frac{x^{-p}}{2\pi }\int_{\mathbb{R}}dt~\psi\left(p+it\right)x^{-it},~~~\forall p\in\mathbb{N}^*.
\end{equation}
We can prove by induction using the theorem of the derivation inside the integral that the $N^{\mathrm{th}}$-order derivative of $\phi$ exists and is given for $x\geq1$ and $p\in\mathbb{N}^*$ by  
\begin{equation}\label{B13}
    \partial_x^N\mathcal{M}^{-1}(\phi)(x)=\frac{(-1)^{N}}{2\pi}\int_{\mathbb{R}}\prod_{l=0}^{N-1}(it+p+l)~\psi(p+it)~x^{-it-p-N}~dt.
\end{equation}
Remembering that $\psi\in\tilde{\mathcal{M}}^+$, we have that \eqref{B13} is well-defined for $x\geq 1$, $p\in\mathbb{N}^*$ and $N\in\mathbb{N}^*$. This together with  \eqref{BE279} implies that $\mathcal{M}^{-1}(\phi)$ is in $\mathcal{C}^{\infty}(\mathbb{R}^+)$. 
\item Now, we need to verify that for all $(p,N)\in\mathbb{N}^2$
\begin{equation}
    x^p~\partial^N_x\mathcal{M}^{-1}(\phi)(x)
\end{equation}
is uniformly bounded with respect to $x$ in $\mathbb{R}^+$. For $0<x<1$, we use \eqref{BE279} to obtain that 
\begin{equation}\label{BE288}
    \left|x^p\partial^N_x\mathcal{M}^{-1}(\phi)(x)\right|\leq O(1)~\left(a_{N}+\sum_{\alpha=0}^{N-1}\mathcal{N}_{\alpha}(\psi)\right),
\end{equation}
where $\mathcal{N}_{\alpha}(\psi):=\sup_{t\in\mathbb{R}}\left|t^{\alpha}\psi\left(\gamma+it\right)\right|$. For $x\geq 1$, we use \eqref{B13} to deduce that 
\begin{equation}\label{BE289}
    \left|x^p\partial^N_x\mathcal{M}^{-1}(\phi)(x)\right|\leq O(1)~\sum_{\alpha=0}^{N-1}\mathcal{N}_{\alpha}(\psi).
\end{equation}
$O(1)$ denotes in both \eqref{BE288}-\eqref{BE289} a constant that only depends on $p$ and $N$. This ends the proof of \eqref{BE265}
\end{itemize}
The last point we need to verify to end the proof of Theorem \ref{MellinThm} is that the maps $\mathcal M$ and $\mathcal M^{-1}$ satisfy
\be\label{MM}
\mathcal{M}^{-1}\mathcal{M}\phi=\phi\qquad\text{and}\qquad \mathcal{M}\mathcal{M}^{-1}\psi=\psi\,,
\ee
for $\phi\in\mathcal S(\mathbb R^+)$ and $\psi\in\tilde{\mathcal M}^+$. Let $\phi$ in $\mathcal{S}(\mathbb{R}^+)$, we have by the definition of the map $\mathcal{M}^-1$
\begin{equation}
\begin{array}{rl}\label{BE291}
\mathcal{M}^{-1}\left(\mathcal{M}(\phi)\right)(x)&=\frac{x^{-c}}{2\pi }\int_{\mathbb{R}}\mathcal{M}(\phi)\left(c+it\right)x^{-it}dt,~~\forall c>0~\mathrm{and~}x>0\\
\partial^{k}\mathcal{M}^{-1}(\mathcal{M}(\phi))(0) &= \phi^{(k)}(0),~~~\forall k\in\mathbb{N}.
\end{array}
\end{equation}
The second line in \eqref{BE291} is a consequence of the fact that $\mathcal{M}(\phi)$ belongs to $\tilde{\mathcal{M}}^+$, which implies that it is meromorphic on the half-plane with simple poles at $s=-k$ and residues $\phi^{(k)}(0)$ for all $k\in\mathbb{N}$. Theorem \ref{Thm1} gives
$$\mathcal M^{-1}\mathcal M \phi=\phi$$
for $\phi\in\mathcal S(\mathbb R^+)$. \\
To show that $\psi\in\tilde{\mathcal{M}}^+$, we have for $\psi\in\tilde{\mathcal M}^+$
\be
\begin{aligned}
\left[\mathcal M\left(\mathcal M^{-1}\psi\right)\right](c+i\lambda)&=\int_0^\infty d\omega ~\omega^{c+i\lambda-1}\int_{\mathbb R}\frac{d\lambda'}{2\pi}\omega^{c'+i\lambda'}\psi(c'+i\lambda')\,,\qquad c,c'>0\\
&=\int_{\mathbb R} \frac{du}{2\pi}~e^{i\lambda u} e^{(c-c')u}\int_{\mathbb R} d\lambda' e^{-i\lambda' u}\psi(c'+i\lambda')\\
&=\int_{\mathbb R} \frac{du}{2\pi}~\mathcal F (\psi)(c'+iu)\, e^{i(\lambda-i(c-c')u)}=\psi(c+i\lambda)\,,
\end{aligned}
\ee
where in the second line we performed the change of variable $\omega=e^u$ and we used the fact that $\psi(c+i\cdot)\in\mathcal S(\mathbb R)$ for $c>0$ to take its Fourier transform.
\end{proof}

%%%%%%%%%%%%%%%%%%%%%%%%%%%%%%%%%%%%%%%%%%%%%%%%%%%%%%%%%%%%%%%%%
\section{The Mellin Transform of Tempered Distributions}\label{sec3}
%%%%%%%%%%%%%%%%%%%%%%%%%%%%%%%%%%%%%%%%%%%%%%%%%%%%%%%%%%%%%%%%%
Given two functions $f,g\in \mathcal S(\mathbb R)$, we define the \textit{generalized convolution product} $f\ast g$ as follows
\be
\left(f\ast g\right)(x) := \int_{\mathbb R} dt f(t) \, g(x+t) e^{-ita}\, .
\ee
We introduce the following lemma which will be used in establishing Parseval's relation:
\begin{lemma}\label{GenConv}(Generalized convolution theorem)
Given the functions $f,g\in \mathcal S(\mathbb R)$ and $\hat f,\hat g$ their respective Fourier transforms, we have 
\be\label{genconv}
\hat{f} (a-s) \hat{g}(s)= \mathcal F\left[ (f\ast g) \right](s)\, .
\ee
\end{lemma}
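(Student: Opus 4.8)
The plan is to prove the identity \eqref{genconv} by a direct computation that starts from the right-hand side, $\mathcal F\left[(f\ast g)\right](s)=\int_{\mathbb R}(f\ast g)(x)\,e^{-ixs}\,dx$, and reduces it to the product $\hat f(a-s)\hat g(s)$ using Fubini's theorem and a single translation of variables. The crucial structural input is that $f$ and $g$ belong to $\mathcal S(\mathbb R)$ and are therefore in $L^1(\mathbb R)$, which will make every interchange of integrals legitimate.

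First I would unfold the definitions: substituting $(f\ast g)(x)=\int_{\mathbb R}f(t)\,g(x+t)\,e^{-ita}\,dt$ into the Fourier integral produces the double integral $\int_{\mathbb R}\int_{\mathbb R}f(t)\,g(x+t)\,e^{-ita}\,e^{-ixs}\,dt\,dx$. To exchange the order of integration I would check absolute convergence: for real $a$ and $s$ the two phase factors have modulus one, so the absolute value of the integrand is simply $|f(t)|\,|g(x+t)|$; integrating first in $x$ via the translation $y=x+t$ gives $\|g\|_{L^1}$, and integrating the result in $t$ gives $\|f\|_{L^1}\|g\|_{L^1}<+\infty$ since Schwartz functions are integrable. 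Hence Fubini's theorem applies and I may carry out the $x$-integration on the inside.

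I would then evaluate the inner integral with the substitution $y=x+t$, obtaining $\int_{\mathbb R}g(x+t)\,e^{-ixs}\,dx=e^{its}\int_{\mathbb R}g(y)\,e^{-iys}\,dy=e^{its}\,\hat g(s)$. Pulling the factor $\hat g(s)$ out of the remaining $t$-integral and combining the exponentials leaves $\hat g(s)\int_{\mathbb R}f(t)\,e^{-it(a-s)}\,dt=\hat g(s)\,\hat f(a-s)$, by the very definition of $\hat f$ evaluated at the shifted argument $a-s$. This is exactly \eqref{genconv}.

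There is no genuine obstacle in this argument; the computation is routine once the Fubini justification is in place. The one point deserving care is the bookkeeping of the two phases: after the translation $y=x+t$, the factor $e^{its}$ that is generated combines with the built-in twist $e^{-ita}$ of the generalized convolution to yield precisely $e^{-it(a-s)}$, and it is this combination that makes the reflected, shifted argument $a-s$ appear in $\hat f$. I would note explicitly that the identity is a statement about real $s$ (and the fixed real parameter $a$), consistent with the Fourier variable being real throughout.
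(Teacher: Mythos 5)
Your proof is correct and is essentially the paper's own argument run in reverse: the paper starts from $\hat f(a-s)\hat g(s)$, writes it as a double integral, and substitutes $u=x+t$ before invoking Fubini with the same $L^1$ bound $\|f\|_{L^1}\|g\|_{L^1}$, whereas you start from $\mathcal F[(f\ast g)](s)$ and perform the identical translation and phase bookkeeping in the other direction. There is no substantive difference in method or in the justification of the interchange of integrals.
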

\begin{proof}
We start from the l.h.s. of~\eqref{genconv} 
\be
\hat{f} (a-s) \hat{g}(s)=\int_{\mathbb R} dt \int_{\mathbb R} du~e^{-it(a-s)}f(t) \, e^{-ius} g(u)\, .
\ee
Performing a change of variable $u=x+t$, we obtain
\be\label{3.444}
\int_{\mathbb R} dt \int_{\mathbb R} dx~e^{-ixs}f(t)g(x+t)e^{-ita}\,.
\ee
Since $f,g\in \mathcal S(\mathbb R)$, we have
\be
\int_{\mathbb R}dt\int_{\mathbb R} dx \left|e^{-ixs}f(t)g(x+t)e^{-ita}\right|=\int_{\mathbb R}dt\left|f(t)\right|\int_{\mathbb R} dx \left|g(x+t)\right|<\infty\,.
\ee
Fubini's theorem together with \eqref{3.444} concludes the proof of lemma~\ref{GenConv}.
\end{proof}
 
\subsection{Parseval's Relation for the Mellin Transform}
From this subsection onward, we change our notations to match the celestial holography literature: $x$ is replaced by $\omega$, $s$ by $\Delta$ and $t$ by $\lambda$. In the new notation, the momenta of massless particles are parameterized as follows
\be\label{Momentum}
p^\mu\equiv\omega q^\mu=\omega\left(\frac{1+x^2}{2},x^a,\frac{1-x^2}{2}\right)\, ,
\ee
where Greek indices run over the $d+2$ bulk spacetime dimensions and the Latin indices run over the $d$ celestial sphere dimensions. $\omega$ designates the energy of the massless particle.
\begin{lemma}\label{Parseval}(Parseval's relation for the Mellin transform of Schwartz functions) Given two functions $f,g\in\mathcal S(\mathbb R^+)$ and their respective Mellin transforms $\tilde f, \tilde g\in\tilde{\mathcal M}^+$, we have \footnote{writing the Lorentz invariant phase space element $\frac{d^3p}{2p^0}$ we get the measure associated to the energy dependence $\omega d\omega$.}
\be\label{parseval}
\int_0^\infty d\omega\, \omega f(\omega)g(\omega)=\int_{c-i\infty}^{c+i\infty}\frac{d\Delta}{2i\pi} ~\tilde f(2-\Delta)\tilde g(\Delta)\, ,
\ee
where $c\in(0,2)$. 
\end{lemma}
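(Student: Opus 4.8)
The plan is to prove Parseval's relation \eqref{parseval} by writing both Mellin transforms as explicit integrals and reducing the statement to a known Fourier-analytic identity. First I would substitute the definitions: on the right-hand side I replace $\tilde f(2-\Delta)=\mathcal M(f)(2-\Delta)$ and $\tilde g(\Delta)=\mathcal M(g)(\Delta)$ using the integral representation from \eqref{sp263}, valid since $f,g\in\mathcal S(\mathbb R^+)$ means their Mellin transforms are holomorphic in $St(0,+\infty)$ by proposition~\ref{prop1}. Writing $\Delta=c+i\lambda$ with $c\in(0,2)$, the point $2-\Delta=2-c-i\lambda$ has real part $2-c\in(0,2)$, so both Mellin integrals converge absolutely on this contour. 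This turns the right-hand side into a triple integral over $\lambda$, and the two energy variables of $f$ and $g$.

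The key step is then the change of variables $\omega=e^u$ (as used in Theorem~\ref{Thm1} and Lemma~\ref{lemme1}) to convert each Mellin transform into a Fourier transform of the Schwartz function $e^{cu}f(e^u)$, which lies in $\mathcal S(\mathbb R)$ by Lemma~\ref{lemme1}. Concretely, I expect the $\lambda$-integral along the vertical contour to become, after the substitution, a Fourier inversion that collapses the two energy integrals onto the diagonal. This is precisely where the generalized convolution theorem (Lemma~\ref{GenConv}) enters: with the identification $a=2$, $s=\Delta$, the product $\hat f(a-s)\hat g(s)=\tilde f(2-\Delta)\tilde g(\Delta)$ is the Fourier transform of the generalized convolution $(f\ast g)$, so integrating over the contour and applying Fourier inversion at the appropriate point reproduces $(f\ast g)$ evaluated so as to yield $\int_0^\infty d\omega\,\omega f(\omega)g(\omega)$. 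The extra factor of $\omega$ in the measure on the left-hand side is exactly accounted for by the shift $a=2$ in the convolution (the weight $e^{-ita}$ with $a=2$ matches the $\omega\,d\omega=e^{2u}\,du$ measure), which is the reason the paper introduces the generalized rather than ordinary convolution.

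The main obstacle will be justifying the interchange of the $\lambda$-integral (over the Mellin inversion contour) with the $\omega$-integrals, i.e. a Fubini argument on the triple integral. The absolute convergence needed follows from the rapid decay guaranteed by proposition~\ref{prop1}iii): since $\tilde f(2-\Delta)$ and $\tilde g(\Delta)$ are Schwartz in the imaginary direction for $c\in(0,2)$, their product decays faster than any power of $\lambda$, and $f,g$ are Schwartz in $\omega$, so the combined integrand is absolutely integrable and Fubini applies. The only subtlety is tracking the constants and the $2i\pi$ versus $2\pi$ normalizations between the Mellin inversion formula and the Fourier conventions fixed in the Notations section; once the bookkeeping is done, the identity follows directly. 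I would present the computation as a short chain of equalities: expand, substitute $\omega=e^u$, apply Lemma~\ref{GenConv} with $a=2$, and invoke Fourier inversion, with the Fubini justification stated once at the outset.
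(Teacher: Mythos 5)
Your overall skeleton coincides with the paper's proof (expand the Mellin transforms, substitute $\omega=e^u$, invoke Lemma~\ref{GenConv}, finish with Fourier inversion), but the pivotal step is misapplied. You propose to use Lemma~\ref{GenConv} ``with the identification $a=2$, $s=\Delta$'' so that $\hat f(a-s)\hat g(s)=\tilde f(2-\Delta)\tilde g(\Delta)$. This does not parse: Lemma~\ref{GenConv} concerns Fourier transforms of Schwartz functions on $\mathbb R$ at a \emph{real} variable $s$, whereas $\Delta=c+i\lambda$ is complex, and $\tilde f$ is not the Fourier transform of $f$ but of an exponentially weighted function $u\mapsto e^{\sigma u}f(e^u)$. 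Moreover, the parameter $a$ enters the generalized convolution only through the unimodular phase $e^{-ita}$, i.e.\ it produces a \emph{translation} of one argument after inversion (this is exactly how the paper later uses $a\neq 0$ in Section~\ref{sec4}, where it generates the shifts $g_1(\log\alpha_1+t)$); a phase cannot generate the real growth factor $e^{2u}$, so your claim that $a=2$ accounts for the measure $\omega\,d\omega=e^{2u}\,du$ is incorrect. Carried out literally, your plan would either fail to typecheck or insert a spurious modulation and evaluate the convolution at the wrong point.

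The paper's mechanism --- and the actual reason for the hypothesis $c\in(0,2)$ --- is to \emph{split} the total weight $2$ between the two functions: set $F(x)=e^{(2-c)x}f(e^x)$ and $G(x')=e^{cx'}g(e^{x'})$, both in $\mathcal S(\mathbb R)$ by Lemma~\ref{lemme1} precisely because $2-c>0$ and $c>0$. The right-hand side of \eqref{parseval} then becomes $\int_{\mathbb R}\frac{d\lambda}{2\pi}\,\hat F(\lambda)\hat G(-\lambda)=\mathcal F^{-1}\bigl[\hat F(\cdot)\hat G(-\cdot)\bigr](0)$, and Lemma~\ref{GenConv} with $a=0$ gives $\int_{\mathbb R}dt\,F(t)G(t)=\int_{\mathbb R}dt\,e^{2t}f(e^t)g(e^t)=\int_0^\infty d\omega\,\omega f(\omega)g(\omega)$; the factor $\omega$ appears because the split weights recombine to $e^{2t}$ on the diagonal. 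A secondary flaw: your Fubini justification asserts absolute integrability of the triple integrand, but $\left|e^{-i\lambda x}e^{i\lambda x'}F(x)G(x')\right|$ is independent of $\lambda$, so the triple integral is \emph{not} absolutely convergent. No such interchange is needed: for fixed $\lambda$ the energy integrals converge absolutely to $\hat F(\lambda)\hat G(-\lambda)$, which is then integrable in $\lambda$ since $\hat F,\hat G\in\mathcal S(\mathbb R)$; the only Fubini argument required is the one already contained in the proof of Lemma~\ref{GenConv}.
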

\begin{proof}
We start from the r.h.s. of equation~\eqref{parseval} by writing the Mellin transforms $\tilde f$ and $\tilde g$ explicitly
\be
\begin{aligned}
\int_{c-i\infty}^{c+i\infty}\frac{d\Delta}{2i\pi} \tilde f(2-\Delta)\tilde g(\Delta)&=\int_{\mathbb R} \frac{d\lambda}{2\pi}\int_0^\infty d\omega d\omega' \omega^{1-c-i\lambda}\omega'^{c-1+i\lambda} f(\omega)g(\omega')\\
&=\int_{\mathbb R}\frac{d\lambda}{2\pi}\int_{\mathbb R} dx\, dx'\, e^{-i\lambda x}e^{i\lambda x'}F(x) G(x)\, ,
\end{aligned}
\ee
where we performed in the last line the change of variables $\omega=e^x$ and $\omega'=e^{x'}$ and introduced the functions $F(x)=e^{(2-c)x}f(e^x)$ and $G(x)=e^{cx'}g(e^{x'})$. Using lemma~\ref{lemme1}, we have that $F,G\in\mathcal S(\mathbb R)$. Therefore, we obtain
\be
\begin{aligned}
\int_{c-i\infty}^{c+i\infty}\frac{d\Delta}{2i\pi} ~\tilde f(2-\Delta)\tilde g(\Delta)&=\int_{\mathbb R} \frac{d\lambda}{2\pi}~\hat{F}(\lambda)\hat{G}(-\lambda)\\
&=\mathcal F^{-1}\left[\hat{F}(\lambda)\hat{G}(-\lambda)\right](0)\,.
\end{aligned}
\ee
Then using lemma~\ref{GenConv} with $a=0$, we have
\be
\begin{aligned}
\int_{c-i\infty}^{c+i\infty}\frac{d\Delta}{2i\pi}~\tilde f(2-\Delta)\tilde g(\Delta)&= \int_{\mathbb R} dt F(t)G(t)\\
&=\int_{\mathbb R} dt~\,e^{2t} f(e^t)g(e^t)\\
&=\int_0^\infty d\omega \,\omega f(\omega)g(\omega)\, ,
\end{aligned}
\ee
where in the last line we performed the change of variable $\omega=e^t$. This concludes the proof of lemma~\ref{Parseval}.
\end{proof}

We can generalize Parseval's relation to the action of tempered distributions on Schwartz functions.
\begin{theorem}\label{MellinThm}(Mellin transform of tempered distributions)
Consider $f\in \mathcal S(\mathbb R^+)$ and $A\in\mathcal S'(\mathbb R^+)$. If $f$ is the inverse Mellin transform of $\tilde f\in \tilde{\mathcal M}^+$ then we can define the Mellin transform $\tilde A$ of $A$ through
\be
\langle\tilde{A},\tilde{f}\rangle_{{{\tilde{\mathcal M}}^{+\prime}},{\tilde{\mathcal M}}^+}=\langle A,f\rangle_{\mathcal S'(\mathbb R^+),\mathcal S(\mathbb R^+)}\, ,
\ee
where the bracket on the l.h.s. is inspired by a Parseval-type relation
\be\label{MellinBracket}
\langle\tilde A,\tilde f\rangle_{\tilde{\mathcal M}^{+\prime},\tilde{\mathcal M}^+}
=\int_{c-i\infty}^{c+i\infty}\frac{d\Delta}{2i\pi} \,\tilde A(\Delta) \tilde f(2-\Delta)\, ,\qquad 2-c\in St(0,\infty)\, .
\ee
We thus obtain
\be\label{MellinDist}
\tilde A=\int_0^\infty d\omega\, \omega^{\Delta-1} A(\omega).
\ee
\end{theorem}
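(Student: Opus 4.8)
The plan is to \emph{define} $\tilde A$ as the transport of $A$ through the inverse Mellin isomorphism, and then to check that the resulting object is a continuous functional on $\tilde{\mathcal M}^+$ whose integral representation is \eqref{MellinDist}.

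First I would set $\langle\tilde A,\tilde f\rangle_{\tilde{\mathcal M}^{+\prime},\tilde{\mathcal M}^+}:=\langle A,\mathcal M^{-1}\tilde f\rangle_{\mathcal S'(\mathbb R^+),\mathcal S(\mathbb R^+)}$. This is unambiguous because Theorem~\ref{mainres} guarantees that every $\tilde f\in\tilde{\mathcal M}^+$ is the image of a \emph{unique} $f=\mathcal M^{-1}\tilde f\in\mathcal S(\mathbb R^+)$, so the right-hand side makes sense and the defining identity $\langle\tilde A,\tilde f\rangle=\langle A,f\rangle$ holds by construction. To see that $\tilde A\in\tilde{\mathcal M}^{+\prime}$, I would use that the isomorphism of Theorem~\ref{mainres} is \emph{topological}: the seminorm estimates \eqref{BE288}--\eqref{BE289} bound the Schwartz seminorms of $\mathcal M^{-1}\psi$ by the quantities $\mathcal N_\alpha(\psi)$ attached to $\psi$, so $\mathcal M^{-1}$ is continuous. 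Since $A$ is continuous on $\mathcal S(\mathbb R^+)$ by assumption, the composite $\tilde A=A\circ\mathcal M^{-1}$ is a continuous linear functional on $\tilde{\mathcal M}^+$, as required.

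Next I would derive the Parseval-type bracket \eqref{MellinBracket} and read off \eqref{MellinDist}. It is cleanest to verify this first on a \emph{regular} distribution, i.e. $A$ represented against the invariant measure by some $g\in\mathcal S(\mathbb R^+)$, so that $\langle A,f\rangle=\int_0^\infty d\omega\,\omega\,g(\omega)f(\omega)$. Applying Lemma~\ref{Parseval} to the pair $(f,g)$ immediately gives $\langle A,f\rangle=\int_{c-i\infty}^{c+i\infty}\frac{d\Delta}{2i\pi}\,\tilde g(\Delta)\tilde f(2-\Delta)$, which matches \eqref{MellinBracket} with $\tilde A(\Delta)=\tilde g(\Delta)=\mathcal M(g)(\Delta)=\int_0^\infty d\omega\,\omega^{\Delta-1}g(\omega)$, i.e. exactly \eqref{MellinDist}. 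Equivalently, and more directly, I would insert the inversion formula of Theorem~\ref{Thm1}, $f(\omega)=\frac{1}{2\pi i}\int_{c-i\infty}^{c+i\infty}\tilde f(s)\,\omega^{-s}\,ds$, into $\langle A,f\rangle$ and exchange the $\omega$-integration with the contour integral (Fubini, licit in the regular case). The inner integral yields $\int_0^\infty d\omega\,\omega^{1-s}A(\omega)=\tilde A(2-s)$, and the substitution $\Delta=2-s$ reproduces \eqref{MellinBracket} on the contour $2-c\in St(0,\infty)$, so that \eqref{MellinDist} is identified as the Mellin transform of $A$.

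The main obstacle is to give \eqref{MellinDist} a rigorous meaning for a genuine distribution $A$: the kernel $\omega^{\Delta-1}$ does not belong to $\mathcal S(\mathbb R^+)$, and $\tilde A$ need not be a holomorphic function of $\Delta$ but may itself be a distribution along the contour. The honest content of the theorem is therefore the pairing definition of the first paragraph, with \eqref{MellinDist} understood formally; the Fubini exchange above is literal only on the weak-$*$ dense subspace of regular distributions, and the general statement then follows by continuity of $\tilde A$ together with density. The other load-bearing point, which I would stress, is that the passage from an algebraic to a topological isomorphism in Theorem~\ref{mainres}---hence the continuity of $\mathcal M^{-1}$---is precisely what upgrades $\tilde A$ from a mere linear form to an element of $\tilde{\mathcal M}^{+\prime}$.
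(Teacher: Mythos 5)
Your proposal is correct, and its computational core coincides with the paper's proof: the paper's argument is exactly your ``more direct'' derivation run in reverse, namely a chain of equalities starting from \eqref{MellinBracket}, passing through $\int_0^\infty d\omega\,\omega A(\omega)f(\omega)$ via Lemma~\ref{Parseval}, inserting the inversion formula of Theorem~\ref{Thm1} for $f$, and substituting $\Delta\to 2-\Delta$ to read off \eqref{MellinDist}. Where you genuinely differ is in the logical architecture. The paper computes formally with the symbol $A(\omega)$ under the integral sign, so its manipulations are literal only when $A$ is a regular distribution; you instead take the pairing $\langle\tilde A,\tilde f\rangle:=\langle A,\mathcal M^{-1}\tilde f\rangle$ as the \emph{definition} (well posed by the bijectivity of $\mathcal M$ in Theorem~\ref{mainres}), verify the Parseval identity honestly on the regular subspace, and state explicitly that \eqref{MellinDist} is a formal transcription of the pairing in general. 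This is a more careful reading of what the theorem actually asserts, and it buys a clean separation between the rigorous definition of $\tilde A$ and the heuristic kernel formula, which the paper's presentation blurs.

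One caveat: your continuity step asserts more than the paper establishes. Theorem~\ref{mainres} is proved only as a linear bijection, and the paper's conclusion explicitly defers the topology of $\tilde{\mathcal M}^+$ to future work, so ``the isomorphism is topological'' is not available off the shelf. Moreover, the bound \eqref{BE288} controls the Schwartz seminorms of $\mathcal M^{-1}\psi$ by $a_N+\sum_\alpha \mathcal N_\alpha(\psi)$, and the residue $a_N$ is not among the quantities $\mathcal N_\alpha(\psi)$; to conclude continuity you would need either to adjoin the residues to the seminorm family on $\tilde{\mathcal M}^+$ or to bound $a_N$ by a Cauchy-type contour estimate using the vertical-line seminorms at, say, $c=-N\pm\frac{1}{2}$, which the decay in $t$ built into \eqref{Mtilde+} makes possible. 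This is a fixable sketch-level assertion rather than a gap in the theorem as stated, since the statement only requires $\tilde A$ to be a well-defined linear functional on $\tilde{\mathcal M}^+$, which your first paragraph already secures.
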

\begin{proof}
\be
\begin{aligned}
\langle\tilde A,\tilde f\rangle_{\tilde{\mathcal M}^{+\prime},\tilde{\mathcal M}^+}
&=\int_{c-i\infty}^{c+i\infty}\frac{d\Delta}{2i\pi} \,\tilde A(\Delta) \tilde f(2-\Delta)\\
&=\int_0^\infty d\omega \, \omega A(\omega)f(\omega)\\
&=\int_0^\infty d\omega \int_{c'-i\infty}^{c'+i\infty}\frac{d\Delta}{2i\pi}\, \omega^{1-\Delta} A(\omega)\tilde f(\Delta)\,,\quad c'\in(0,\infty)\\
&=\int_0^\infty d\omega \int_{c-i\infty}^{c+i\infty}\frac{d\Delta}{2i\pi}\, \omega^{\Delta-1} A(\omega)\tilde f(2-\Delta)\,,\quad 2-c\in(0,\infty)\, ,
\end{aligned}
\ee
where in the last line we performed the change of variable $\Delta\rightarrow 2-\Delta$. The above expression is equal to equation~\eqref{MellinBracket} by defining $\tilde A$ to be~\eqref{MellinDist}.
\end{proof}

\begin{theorem}(Inverse Mellin transform of elements in $\tilde{\mathcal M}^{+\prime}$)
Equivalently we can define the inverse Mellin transform $A$ of $\tilde A\in\tilde{\mathcal M}^{+\prime}$ through 
\be
\langle A,f\rangle_{\mathcal S'(\mathbb R^+),\mathcal S(\mathbb R^+)}=\langle\tilde{A},\tilde{f}\rangle_{{{\tilde{\mathcal M}}^{+\prime}},{\tilde{\mathcal M}}^+}\, ,
\ee
where $\tilde f\in\tilde{\mathcal M}^{+}$ is the Mellin transform of $f\in S(\mathbb R^+)$. Similarly we obtain
\be
A(\omega)=\int_{c-i\infty}^{c+i\infty}\frac{d\Delta}{2i\pi} \omega^{-\Delta}\tilde A(\Delta)\,,\quad 2-c\in(0,\infty)\, .
\ee
\end{theorem}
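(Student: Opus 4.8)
The plan is to read the proof of Theorem~\ref{MellinThm} backwards: every step there has a mirror image here. The structural input I would exploit is Theorem~\ref{mainres}, which gives a (topological) isomorphism $\mathcal M:\mathcal S(\mathbb R^+)\to\tilde{\mathcal M}^+$ with continuous inverse $\mathcal M^{-1}$. Dualizing, its transpose $\,^t\mathcal M:\tilde{\mathcal M}^{+\prime}\to\mathcal S'(\mathbb R^+)$ is then automatically an isomorphism of the dual spaces. Concretely, given $\tilde A\in\tilde{\mathcal M}^{+\prime}$ I would \emph{define} the functional $A$ on $\mathcal S(\mathbb R^+)$ by $\langle A,f\rangle:=\langle\tilde A,\mathcal M(f)\rangle$, i.e. by the stated relation with $\tilde f=\mathcal M(f)$. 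Since every $\tilde f\in\tilde{\mathcal M}^+$ arises as $\mathcal M(f)$ for a unique $f\in\mathcal S(\mathbb R^+)$, and since $\mathcal M$ is continuous, the composition $f\mapsto\langle\tilde A,\mathcal M(f)\rangle$ is a well-defined continuous linear functional, hence an element of $\mathcal S'(\mathbb R^+)$. This assignment is precisely the inverse of the map $A\mapsto\tilde A$ constructed in Theorem~\ref{MellinThm}, so the existence and uniqueness of $A$ require no new argument.

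To extract the explicit kernel I would start from the defining bracket and unfold it using~\eqref{MellinBracket},
\be
\langle A,f\rangle_{\mathcal S'(\mathbb R^+),\mathcal S(\mathbb R^+)}=\langle\tilde A,\tilde f\rangle_{\tilde{\mathcal M}^{+\prime},\tilde{\mathcal M}^+}=\int_{c-i\infty}^{c+i\infty}\frac{d\Delta}{2i\pi}\,\tilde A(\Delta)\,\tilde f(2-\Delta)\,.
\ee
Next I would insert the Mellin integral representation of $\tilde f(2-\Delta)=\mathcal M(f)(2-\Delta)$, which is legitimate because $2-c$ lies in the fundamental strip $St(0,\infty)$ of $\tilde f$ by the first point of Proposition~\ref{prop1},
\be
\tilde f(2-\Delta)=\int_0^\infty d\omega\,\omega^{1-\Delta}f(\omega)\,,
\ee
and then interchange the $\Delta$-contour integral with the $\omega$-integral. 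Writing $\omega^{1-\Delta}=\omega\,\omega^{-\Delta}$ yields
\be
\langle A,f\rangle=\int_0^\infty d\omega\,\omega f(\omega)\left[\int_{c-i\infty}^{c+i\infty}\frac{d\Delta}{2i\pi}\,\omega^{-\Delta}\tilde A(\Delta)\right]\,,
\ee
and comparing with the pairing $\langle A,f\rangle=\int_0^\infty d\omega\,\omega A(\omega)f(\omega)$ identifies the bracketed kernel as $A(\omega)$, which is exactly the claimed formula together with the constraint $2-c\in(0,\infty)$.

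The delicate point, and the step I expect to be the main obstacle, is the Fubini interchange combined with the fact that $\tilde A$ is a priori an element of the dual $\tilde{\mathcal M}^{+\prime}$ rather than an honest function: the inner integral $\int_{c-i\infty}^{c+i\infty}\omega^{-\Delta}\tilde A(\Delta)\,d\Delta$ must be read as the action of the functional $\tilde A$ on the contour-restricted test function $\Delta\mapsto\omega^{-\Delta}$, in the same symbolic sense as equation~\eqref{MellinDist} of Theorem~\ref{MellinThm}. The rigorous justification rests on the Schwartz/polynomial decay of $\tilde f$ on vertical lines established in the third point of Proposition~\ref{prop1}, together with the corresponding continuity (order) bound defining $\tilde{\mathcal M}^{+\prime}$; these guarantee absolute convergence of the double integral and thereby legitimize the swap, exactly as in the estimates already invoked for Parseval's relation (Lemma~\ref{Parseval}). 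Everything else is the formal transpose of Theorem~\ref{MellinThm}.
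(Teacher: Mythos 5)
Your proposal is correct and takes essentially the same route as the paper: the paper gives no separate proof of this theorem, simply running the formal bracket computation of Theorem~\ref{MellinThm} in reverse (insert the Mellin representation of $\tilde f(2-\Delta)$ into the pairing~\eqref{MellinBracket}, interchange the $\Delta$- and $\omega$-integrals, and read off the kernel $A(\omega)$, understood symbolically just as~\eqref{MellinDist} is), which is exactly your second and third paragraphs. The only caveat is that your opening appeal to a \emph{topological} isomorphism with continuous inverse and an automatic transpose isomorphism of duals outruns what the paper establishes --- it explicitly refrains from topologizing $\tilde{\mathcal M}^+$ (deferring this to future work in the conclusion), so well-definedness there rests on the algebraic bijectivity of $\mathcal M$ and the formal Parseval-type bracket alone.
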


%%%%%%%%%%%%%%%%%%%%%%%%%%%%%%%%%%%%%%%%%%%%%%%%%%%%%%%%%%%%%%%%%
\section{Applications: Graviton Celestial Amplitudes}\label{sec4}
%%%%%%%%%%%%%%%%%%%%%%%%%%%%%%%%%%%%%%%%%%%%%%%%%%%%%%%%%%%%%%%%%
In this section, we compute rigorously tree-level graviton celestial amplitudes by taking the Mellin transform of momentum-space amplitudes, which are tempered distributions. In particular, we focus on results found in~\cite{Puhm:2019zbl} where the graviton celestial amplitudes considered in $d+2=4$ bulk dimensions. Our method is based on results from section~\ref{sec3}.
%%%%%%%%%%%%%%%%%%%%%%%%%%%%%%%%%%%%%%%%%%%%%%%%%%%%%%%%%%%%%%%%%%
\subsection{Three-Graviton Celestial Amplitude}
%%%%%%%%%%%%%%%%%%%%%%%%%%%%%%%%%%%%%%%%%%%%%%%%%%%%%%%%%%%%%%%%%
The 3-point MHV tree-level graviton amplitude is 
\be\label{MomentumBasis}
\Acal_{--+}(\omega_i,z_i,\bz_i)= \left(\frac{\omega_1 \omega_2}{\omega_3}\frac{z_{12}^3}{z_{23}z_{31}}\right)^2\delta^{(4)}\left(\sum_{i=1}^3\epsilon_i \omega_i q_i\right)\, ,
\ee
where $z_i=x_i^1+i x_i^2$, $z_{ij}=z_i-z_j$ and $\epsilon_i=\pm 1$ if the particle is outgoing/incoming. Using the parameterization~\eqref{Momentum},  the momentum conserving $\delta$ can be written as
\be\label{DeltaMomentum}
\delta^{(4)}\left(\sum_{i=1}^3\epsilon_i \omega_i q_i\right)=\frac{1}{4\omega_3^2}\frac{sgn(z_{23}z_{31})}{z_{23}z_{31}}\delta\left(\omega_1-\alpha_1 \omega_3\right)\delta\left(\omega_2-\alpha_2 \omega_3\right)\delta\left(\bz_{23}\right)\delta\left(\bz_{31}\right)\, ,
\ee
where $\alpha_1=\frac{\epsilon_3}{\epsilon_1}\frac{z_{23}}{z_{12}}$ and $\alpha_2=\frac{\epsilon_3}{\epsilon_2}\frac{z_{31}}{z_{12}}$ and $\bz_{ij}=z_{ij}^*$. Performing the integrals (Mellin transforms) on all $\omega_i$'s, we obtain~\cite{Puhm:2019zbl}
\be\label{BoostBasis}
\tilde\Acal_{--+}(\Delta_i,z_i,\bz_i)=\frac{z_{12}^6}{|z_{23}z_{31}|^3}\delta(\bz_{23})\delta(\bz_{31})\alpha_1^{\Delta_1+1}\alpha_2^{\Delta_2+1}\Theta(\alpha_1)\Theta(\alpha_2)\int_0^\infty d\omega_3 \, \omega_3^{i\Lambda+3c-3}\, ,
\ee
where $\Delta_i=c+i\lambda_i$ and $\Lambda=\sum_{i=1}^3\lambda_i \,\in \mathbb R$. The integral in $\omega_3$ is divergent at first glance. However this divergence is distributional. Indeed the scattering amplitude is a tempered distribution and should be treated as such when taking its Mellin transform.
\par
Since the $(z,\bz)$ dependence of the amplitude goes through when going from momentum to boost basis, we will only consider the energy dependence of the amplitude~\eqref{MomentumBasis} with~\eqref{DeltaMomentum} as a tempered distribution, i.e. acting on functions in $\Scal^3(\mathbb R^+)$. We denote this energy dependence by
\be
\Omega_3(\{\omega_i\},\alpha_1,\alpha_2)=\frac{(\omega_1\omega_2)^2}{\omega_3^4}\delta(\omega_1-\alpha_1\omega_3)\delta(\omega_2-\alpha_2\omega_3)\, .
\ee
We can then write the bracket of duality between $\Omega_3$ and functions $f_i\in\mathcal S^3(\mathbb R^+)$
\be\label{S'S}
\begin{aligned}
\langle\Omega_3,\{f_{i}\}\rangle_{\Scal',\Scal} &=\prod_{i=1}^3\left(\int_{0}^\infty  d\omega_i \omega_i\right) \frac{(\omega_1\omega_2)^2}{\omega_3^4}\delta(\omega_1-\alpha_1\omega_3)\delta(\omega_2-\alpha_2\omega_3)\, f_1(\omega_1)f_2(\omega_2)f_3(\omega_3)\\
&=\alpha_1^3\alpha_2^3\int_0^\infty d\omega_3 \omega_3^3 f_1(\alpha_1\omega_3)f_2(\alpha_2 \omega_3)f_3(\omega_3)\, .
\end{aligned}
\ee
The above integral is bounded since the functions $f_i$ are in $\Scal(\mathbb R^+)$.\\
In the boost basis, we know from theorem~\ref{MellinThm} that the Mellin transform of $\Omega$ is the part of~\eqref{BoostBasis} that depends on the conformal dimensions $\Delta_i$
\be
D_3(\{\Delta_i\},\alpha_1,\alpha_2)=\prod_{i=1}^3\left(\int_0^\infty d\omega_i \,\omega_i^{\Delta_i-1}\right)\Omega_3(\{\omega_i\},\alpha_1,\alpha_2)=\alpha_1^{\Delta_1+1}\alpha_2^{\Delta_2+1}\int_0^\infty d\omega_3 \omega_3^{i\Lambda +3c-3}\, .
\ee
Using 
\begin{align}\label{In}
I_n=&\int_{\mathbb{R}} du~e^{n u} e^{i\Lambda u} =\int_{\mathbb{R}}du \,\sum_{j=0}^\infty \frac{(nu)^j}{j!} e^{i\Lambda u}\nonumber\\
=&\int_{\mathbb{R}} du\,\sum_{j=0}^\infty \frac{(-in)^j}{j!}\p_\Lambda^j e^{i\Lambda u}=2\pi\sum_{j=0}^\infty \frac{(-in)^j}{j!}\p_\Lambda^j\delta(\Lambda)\, ,  
\end{align}
where $n=3c-2$, we can write its bracket with three functions $\tilde f_i\in{{\tilde{\mathcal M}}^+}$ with $i\in\{1,2,3\}$
\be\label{Df}
\begin{aligned}
D_f\equiv \langle D_3,\{\tilde f_i\}\rangle_{{{\tilde{\mathcal M}}^{+\prime}},{{\tilde{\mathcal M}}^+}}=\prod_{i=1}^3&\left(\int_{c-i\infty}^{c+i\infty} \frac{d\Delta_i}{2i\pi}\right)\alpha_1^{\Delta_1+1}\alpha_2^{\Delta_2+1}\\
&\times 2\pi\sum_{j=0}^\infty \frac{\left(-i(3c-2)\right)^j}{j!}~\delta^{(j)}(\Lambda) \tilde f_1(2-\Delta_1)\tilde f_2(2-\Delta_2)\tilde f_3(2-\Delta_3)\, ,
\end{aligned}
\ee
where the bracket is deduced from Parseval's relation. 
After performing the change of variable $\Delta_3\to\Delta=\sum_{i=1}^3\Delta_i$ we obtain
\be
\begin{aligned}
D_f=\int_{c-i\infty}^{c+i\infty} &\frac{d\Delta_1}{2i\pi}\frac{d\Delta_2}{2i\pi}\int_{3c-i\infty}^{3c+i\infty} \frac{d\Delta}{2i\pi}
\alpha_1^{\Delta_1+1}\alpha_2^{\Delta_2+1}\\
&\times 2\pi\sum_{j=0}^\infty \frac{\left(-i(3c-2)\right)^j}{j!}~\delta^{(j)}(\Lambda) \tilde f_1(2-\Delta_1)\tilde f_2(2-\Delta_2)\tilde f_3(2+\Delta_1+\Delta_2-\Delta)\, .
\end{aligned}
\ee
Rewriting the $\Delta$ integral in terms of $\Lambda=\Im{(\Delta)}$ 
\be
\int_{3c-i\infty}^{3c+i\infty} \frac{d\Delta}{2i\pi} ~h(\Delta)=\int_{\mathbb{R}} \frac{d\Lambda}{2\pi}~h(3c+i\Lambda)\, ,
\ee
and integrating by parts in the $\Lambda$ variable yields the following result
\be\label{E'E}
D_f=\int_{c-i\infty}^{c+i\infty} \frac{d\Delta_1}{2i\pi}\frac{d\Delta_2}{2i\pi}
\alpha_1^{\Delta_1+1}\alpha_2^{\Delta_2+1}\tilde f_1(2-\Delta_1)\tilde f_2(2-\Delta_2)\tilde f_3(\Delta_1+\Delta_2)\, .
\ee
In \eqref{E'E}, the infinite sum shifts $\Lambda$ by $i(3c-2)$ in the argument of $\tilde f_3$. The integral is indeed well-defined since the functions $\tilde f_i (c+i \cdot)\in \mathcal S(\mathbb R)$.
\par
Now, we check that the bracket $D_f$ is indeed equal to the bracket~\eqref{S'S} in momentum space. From this we can confirm that $D_3$ is indeed the Mellin transform of $\Omega_3$. We start by writing the $\tilde f_i$'s as the Mellin transforms of the functions $f_i$ for $i=1,2,3$
\be
\tilde f_i(\Delta_i)=\int_{\mathbb R^+} d\omega_i~\omega_i^{\Delta_i-1}f_i(\omega_i)\, .
\ee
Then, we have
\be
D_f=\alpha_1^2 \alpha_2^2 \int_{c-i\infty}^{c+i\infty} \frac{d\Delta_1}{2i\pi}\frac{d\Delta_2}{2i\pi}\int_{(\mathbb{R}^+)^3} \prod_{i=1}^3 d\omega_i \left(\frac{\alpha_1\omega_3}{\omega_1}\right)^{\Delta_1-1}\left(\frac{\alpha_2\omega_3}{\omega_2}\right)^{\Delta_2-1}\omega_3~ f_1(\omega_1)f_2(\omega_2)f_3(\omega_3)\,.
\ee
Performing the change of variables $u=\log(\omega_3)$, $v=\log(\omega_1)$ and $w=\log(\omega_2)$, we deduce
\be
\begin{aligned}
D_f=(\alpha_1\alpha_2)^{c+1} \int_{\mathbb R} \frac{d\lambda_1}{2\pi}\frac{d\lambda_2}{2\pi} &e^{i(\lambda_1 \log\alpha_1+\lambda_2 \log\alpha_2)}\\
&\times \int_{\mathbb R^3} du\, dv\, dw\, e^{iu(\lambda_1+\lambda_2)}e^{-iv\lambda_1}e^{-iw\lambda_2} g_1(v) \, g_2(w)\, g_3(u)\, ,
\end{aligned}
\ee
with $g_3(u)=e^{2cu}f_3(e^u)$, $g_1(v)=e^{(2-c)v}f_1(e^v)$ and $g_2(w)=e^{(2-c)w}f_2(e^w)$. Using~\eqref{2dstrip} and lemma~\ref{lemme1}, we have that $g_i\in\mathcal S(\mathbb R)$ for $i\in\{1,2,3\}$. The integrals over $u$, $v$ and $w$ are then the Fourier transforms of the respective functions $g_1$, $g_2$ and $g_3$. Then we obtain
\be
D_f=(\alpha_1\alpha_2)^{c+1}\int_{\mathbb R} \frac{d\lambda_1}{2\pi}\frac{d\lambda_2}{2\pi} e^{i(\lambda_1 \log \alpha_1+\lambda_2\log\alpha_2)}\hat g_3(-\lambda_1-\lambda_2)\hat g_1(\lambda_1)\hat g_2(\lambda_2)\, .
\ee
Using lemma~\ref{GenConv} we have
\be
D_f=(\alpha_1\alpha_2)^{c+1}\int_{\mathbb R} \frac{d\lambda_2}{2\pi} \int_{\mathbb R} dt\, g_3(t) g_1(\log\alpha_1+t)e^{it\lambda_2}e^{i\lambda_2\log\alpha_2}\hat g_2(\lambda_2)\, .
\ee
Remembering that $g_1$, $g_3$ and $\hat g_2$ are in $\mathcal S (\mathbb R)$ and using Fubini's theorem, we obtain
\be
\begin{aligned}
D_f&=(\alpha_1\alpha_2)^{c+1}\int_{\mathbb R} dt g_3(t)g_1(\log\alpha_1+t)g_2(\log\alpha_2+t)\\
&=\alpha_1^3\alpha_2^3\int_{\mathbb R} dt e^{4t}f_3(e^t)f_1(\alpha_1 e^t)f_2(\alpha_2 e^t)\\
&=\alpha_1^3\alpha_2^3\int_0^\infty d\omega \omega^3 f_3(\omega)f_1(\alpha_1 \omega)f_2(\alpha_2 \omega)\, ,
\end{aligned}
\ee
where in the last line we performed the change of variable $\omega=e^t$ which indeed confirms that 
\be
D_f=\langle\Omega_3,\{f_{i}\}\rangle_{\Scal',\Scal}\, ,
\ee
and also that $D_3$ is the Mellin transform of $\Omega_3$.

\subsection{Four-Graviton Celestial Amplitude}
%%%%%%%%%%%%%%%%%%%%%%%%%%%%%%%%%%%%%%%%%%%%%%%%%%%%%%%%%%%%%%%%%
We proceed similarly with the 4-graviton amplitude. Its energy dependence can be expressed as follows~\cite{Puhm:2019zbl}
\be\label{4GravEnergy}
\Omega_4=\frac{\omega_1^3 \omega_2^3}{\omega_3^3 \omega_4^2} \delta(\omega_1-\beta_1 \omega_3) \delta(\omega_2-\beta_2 \omega_3) \delta(\omega_4-\beta_4 \omega_3)  \, ,
\ee
where $\beta_i$ are functions of $z_{ij}$ and $\epsilon_i$. Its action on functions in $\Scal(\mathbb R^+)$ is
\be
\begin{aligned}
\langle\Omega_4,\{f_{i}\}\rangle_{\Scal',\Scal}&=\prod_{i=1}^4\left(\int_0^\infty d\omega_i \omega_i f_i(\omega_i)\right) \Omega_4(\{\omega_i\})\\
&=\frac{\beta_1^4 \beta_2^4}{\beta_4} \int_0^\infty d\omega_3\, \omega_3^5 \,f_1(\beta_1\omega_3)f_2(\beta_2\omega_3)f_3(\omega_3)f_4(\beta_4\omega_3)\, .
\end{aligned}
\ee
Taking the Mellin transform on each $\omega_i$ in~\eqref{4GravEnergy}, we obtain
\be
D_4=\beta_1^{\Delta_1+2}\beta_2^{\Delta_2+2}\beta_4^{\Delta_4-3}\int_0^\infty d\omega_3 \, \omega_3^{\sum_{i=1}^4\Delta_i-3}\, ,
\ee
where we can take $\sum_{i=1}^4\Delta_i=4c+i\Lambda$ with $\Lambda=\sum_{i=1}^4 \lambda_i$. Using equation~\eqref{In} as well as Parseval's formula, we obtain the following duality
\be
\begin{aligned}
\langle D_4,\{\tilde f_i\}\rangle_{{{\tilde{\mathcal M}}^{+\prime}},{{\tilde{\mathcal M}}^+}}=\prod_{\substack{i=1\\i\neq 3}}^4&\left(\int_{c-i\infty}^{c+i\infty}\frac{d\Delta_i}{2i\pi}\right)\,\beta_1^{\Delta_1+2}\beta_2^{\Delta_2+2}\beta_4^{\Delta_4-3}\\
&\times \tilde f_1(2-\Delta_1)\tilde f_2(2-\Delta_2)\tilde f_4(2-\Delta_4)\tilde f_3(\Delta_1+\Delta_2+\Delta_4)\, .
\end{aligned}
\ee
The steps are similar to that of the 3-graviton case. Upon a similar check, we also have that
\be
\langle D_4,\{\tilde f_i\}\rangle_{{{\tilde{\mathcal M}}^{+\prime}},{{\tilde{\mathcal M}}^+}}=\langle\Omega_4,\{f_{i}\}\rangle_{\Scal',\Scal}\, .
\ee

%%%%%%%%%%%%%%%%%%%%%%%%%%%%%%%%%%%%%%
\subsection*{The Existence of an Overlapping Holomorphic Strip}
%%%%%%%%%%%%%%%%%%%%%%%%%%%%%%%%%%%%%%
We saw that the functions $\tilde f$ on which the celestial amplitudes act are holomorphic on the complex half-plane $\mathbb C_+$ and have simple poles at $\Delta=-n$ where $n\in\mathbb N$. However in our expressions, namely~\eqref{E'E}, we need to make sure that there is a holomorphic strip denoted by $S_c$ common to all the functions appearing in the integrand and that encloses a contour along which we can integrate.\\
If the function $\tilde f(\Delta)$ only has simple poles at $\Delta=-n$ then $\tilde f(2-\Delta)$ has poles at $\Delta=2+n$ (c.f. figure~\ref{Sc2}).

\begin{figure}[H]
\centering
\begin{tikzpicture}
\draw[thick,->]
(-3,0)--(3,0);
\draw[thick,->]
(0,-2)--(0,2);
\filldraw[thick,red] (1,0) circle (1.5pt);
\filldraw[thick,red] (1.5,0) circle (1.5pt);
\filldraw[thick,red] (2,0) circle (1.5pt);
\filldraw[thick,red] (2.5,0) circle (1.5pt);
%%%%%%%%%%
\filldraw[thick,blue] (0,0) circle (1.5pt);
\filldraw[thick,blue] (-0.5,0) circle (1.5pt);
\filldraw[thick,blue] (-1,0) circle (1.5pt);
\filldraw[thick,blue] (-1.5,0) circle (1.5pt);
\filldraw[thick,blue] (-2,0) circle (1.5pt);
\filldraw[thick,blue] (-2.5,0) circle (1.5pt);
\fill[green, pattern=north west lines] (0.1,-2) rectangle (0.9,2);
\end{tikzpicture}
\caption{This figure represents the complex $\Delta$-plane. The blue dots are the poles of the functions $\tilde f(\Delta)$ and the red dots are the poles of $\tilde f(2-\Delta)$. The striped region represents the strip of holomorphy $S_c$ common to both functions.}
\label{Sc2}
\end{figure}
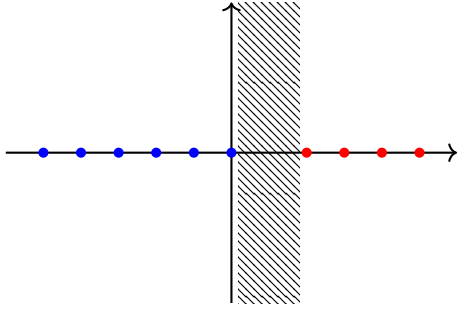
\noindent It is clear that the common strip of holomorphy is
\be\label{2dstrip}
S_c\supset\{\Delta\in \mathbb C\, |\, \Re(\Delta)\in(0,2)\}\, ,
\ee
and that for $c=1$, the contour of integration in~\eqref{E'E} lies within $S_c$. This is in accordance with the celestial holography literature where this choice was made.

\subsubsection*{General Dimension $d+2$}
Notice that the holomorphic strip included the contour where $c=1$ for $d=2$ because the energy integral appearing had an additional factor of $\omega$. This comes from the Jacobian of the coordinate change ($p^\mu$ to $\omega$, $z$ and $\bz$). For general $d$, the factor that would appear is proportional to $\omega^{d-1}$ and that shifts the argument of the second function in~\eqref{Parseval} so that we have $\tilde g(d-\Delta)$. Notice that for $d=2$ we have $\tilde g(2-\Delta)$ which appears in~\eqref{Df}. This increases the width of the common strip of holomorphy such that
\be
\{\Delta\in \mathbb C\, |\, \Re(\Delta)\in(0,d)\}\subset S_c\, .
\ee
In that case the putative choice $c=\frac{d}{2}$ for the contour of integration remains inside of the common strip of holomorphy $S_c$.

\section{Conclusion}\label{sec5}
In this paper, we treated the Mellin transform of distributions in $\mathcal S'(\mathbb R^+)$ in order to have properly well-defined celestial amplitudes. Since tempered distributions are dual to the Schwartz space, we started by characterizing the Mellin transform of $\mathcal S(\mathbb R^+)$ which we denote by $\tilde {\mathcal M}^+$. It is a class of meromorphic functions with simple poles at non-positive integer values of its argument with certain fall-offs in the direction of the imaginary axis. Then, we constructed distributions in $\tilde {\mathcal M}^{+\prime}$ by duality through a Parseval-type relation. Celestial amplitudes belong to the dual space $\tilde {\mathcal M}^{+\prime}$ and we applied the obtained results to the computation of tree-level three- and four-point graviton scattering amplitudes. 
\par
Our framework is only concerned with massless scattering since massive amplitudes require a more complicated transform involving bulk to boundary propagators. Therefore it would be interesting but more difficult to address the same questions for massive scattering amplitudes.\\
Furthermore, we have characterized the space $\tilde {\mathcal M}^{+}$ but refrained from studying its topology. We keep this question for future work.
\\
Finally, the Schwartz space is dense in the Hilbert space $L^2$ which is of interest in QFT and which should ideally define scattering amplitudes. A natural interest in this case is to study the Mellin transform of the larger space $L^2(\mathbb{R}^+)$.

\section*{Acknowledgements}
We would like to thank Guillaume Bossard and Andrea Puhm for useful discussions. YP is supported by the PhD track fellowship of Ecole Polytechnique.

 %%%%%%%%%%%%%%%%%%%%%%%%%%%%%%%%%%%%%%
\appendix

\bibliographystyle{utphys}
\bibliography{Ref}

\providecommand{\href}[2]{#2}\begingroup\raggedright\begin{thebibliography}{10}

\bibitem{Peskin:1995ev}
M.~E. Peskin and D.~V. Schroeder, {\em {An Introduction to quantum field
  theory}}.
\newblock Addison-Wesley, Reading, USA, 1995.

\bibitem{0f5719a4-0389-38e1-93d8-123291531606}
I.~HALPERIN and L.~SCHWARTZ, {\em Introduction to the Theory of Distributions}.
\newblock University of Toronto Press, 1952.
\newblock \url{http://www.jstor.org/stable/10.3138/j.ctt1vxmd4v}.

\bibitem{Pasterski:2016qvg}
S.~Pasterski, S.-H. Shao, and A.~Strominger, ``{Flat Space Amplitudes and
  Conformal Symmetry of the Celestial Sphere},''
  \href{http://dx.doi.org/10.1103/PhysRevD.96.065026}{{\em Phys. Rev. D}
  {\bfseries 96} no.~6, (2017) 065026},
  \href{http://arxiv.org/abs/1701.00049}{{\ttfamily arXiv:1701.00049
  [hep-th]}}.

\bibitem{Pasterski:2017kqt}
S.~Pasterski and S.-H. Shao, ``{Conformal basis for flat space amplitudes},''
  \href{http://dx.doi.org/10.1103/PhysRevD.96.065022}{{\em Phys. Rev. D}
  {\bfseries 96} no.~6, (2017) 065022},
  \href{http://arxiv.org/abs/1705.01027}{{\ttfamily arXiv:1705.01027
  [hep-th]}}.

\bibitem{He:2014laa}
T.~He, V.~Lysov, P.~Mitra, and A.~Strominger, ``{BMS supertranslations and
  Weinberg's soft graviton theorem},''
  \href{http://dx.doi.org/10.1007/JHEP05(2015)151}{{\em JHEP} {\bfseries 05}
  (2015) 151},
\href{http://arxiv.org/abs/1401.7026}{{\ttfamily arXiv:1401.7026 [hep-th]}}.
%%CITATION = ARXIV:1401.7026;%%.

\bibitem{Kapec:2014opa}
D.~Kapec, V.~Lysov, S.~Pasterski, and A.~Strominger, ``{Semiclassical Virasoro
  symmetry of the quantum gravity $ \mathcal{S}$-matrix},''
  \href{http://dx.doi.org/10.1007/JHEP08(2014)058}{{\em JHEP} {\bfseries 08}
  (2014) 058},
\href{http://arxiv.org/abs/1406.3312}{{\ttfamily arXiv:1406.3312 [hep-th]}}.
%%CITATION = ARXIV:1406.3312;%%.

\bibitem{Lysov:2014csa}
V.~Lysov, S.~Pasterski, and A.~Strominger, ``{Low's Subleading Soft Theorem as
  a Symmetry of QED},''
  \href{http://dx.doi.org/10.1103/PhysRevLett.113.111601}{{\em Phys. Rev.
  Lett.} {\bfseries 113} no.~11, (2014) 111601},
\href{http://arxiv.org/abs/1407.3814}{{\ttfamily arXiv:1407.3814 [hep-th]}}.
%%CITATION = ARXIV:1407.3814;%%.

\bibitem{Campiglia:2014yka}
M.~Campiglia and A.~Laddha, ``{Asymptotic symmetries and subleading soft
  graviton theorem},'' \href{http://dx.doi.org/10.1103/PhysRevD.90.124028}{{\em
  Phys. Rev.} {\bfseries D90} no.~12, (2014) 124028},
\href{http://arxiv.org/abs/1408.2228}{{\ttfamily arXiv:1408.2228 [hep-th]}}.
%%CITATION = ARXIV:1408.2228;%%.

\bibitem{He:2014cra}
T.~He, P.~Mitra, A.~P. Porfyriadis, and A.~Strominger, ``{New Symmetries of
  Massless QED},'' \href{http://dx.doi.org/10.1007/JHEP10(2014)112}{{\em JHEP}
  {\bfseries 10} (2014) 112},
\href{http://arxiv.org/abs/1407.3789}{{\ttfamily arXiv:1407.3789 [hep-th]}}.
%%CITATION = ARXIV:1407.3789;%%.

\bibitem{Kapec:2015vwa}
D.~Kapec, V.~Lysov, S.~Pasterski, and A.~Strominger, ``{Higher-dimensional
  supertranslations and Weinberg\textquoteright{}s soft graviton theorem},''
  \href{http://dx.doi.org/10.4310/AMSA.2017.v2.n1.a2}{{\em Ann. Math. Sci.
  Appl.} {\bfseries 02} (2017) 69--94},
  \href{http://arxiv.org/abs/1502.07644}{{\ttfamily arXiv:1502.07644 [gr-qc]}}.

\bibitem{Campiglia:2015yka}
M.~Campiglia and A.~Laddha, ``{New symmetries for the Gravitational
  S-matrix},'' \href{http://dx.doi.org/10.1007/JHEP04(2015)076}{{\em JHEP}
  {\bfseries 04} (2015) 076},
\href{http://arxiv.org/abs/1502.02318}{{\ttfamily arXiv:1502.02318 [hep-th]}}.
%%CITATION = ARXIV:1502.02318;%%.

\bibitem{Campiglia:2015qka}
M.~Campiglia and A.~Laddha, ``{Asymptotic symmetries of QED and Weinberg's soft
  photon theorem},'' \href{http://dx.doi.org/10.1007/JHEP07(2015)115}{{\em
  JHEP} {\bfseries 07} (2015) 115},
\href{http://arxiv.org/abs/1505.05346}{{\ttfamily arXiv:1505.05346 [hep-th]}}.
%%CITATION = ARXIV:1505.05346;%%.

\bibitem{Kapec:2015ena}
D.~Kapec, M.~Pate, and A.~Strominger, ``{New Symmetries of QED},''
\href{http://arxiv.org/abs/1506.02906}{{\ttfamily arXiv:1506.02906 [hep-th]}}.
%%CITATION = ARXIV:1506.02906;%%.

\bibitem{Campiglia:2015kxa}
M.~Campiglia and A.~Laddha, ``{Asymptotic symmetries of gravity and soft
  theorems for massive particles},''
  \href{http://dx.doi.org/10.1007/JHEP12(2015)094}{{\em JHEP} {\bfseries 12}
  (2015) 094}, \href{http://arxiv.org/abs/1509.01406}{{\ttfamily
  arXiv:1509.01406 [hep-th]}}.

\bibitem{Campiglia:2016jdj}
M.~Campiglia and A.~Laddha, ``{Sub-subleading soft gravitons: New symmetries of
  quantum gravity?},''
  \href{http://dx.doi.org/10.1016/j.physletb.2016.11.046}{{\em Phys. Lett.}
  {\bfseries B764} (2017) 218--221},
\href{http://arxiv.org/abs/1605.09094}{{\ttfamily arXiv:1605.09094 [gr-qc]}}.
%%CITATION = ARXIV:1605.09094;%%.

\bibitem{Campiglia:2016hvg}
M.~Campiglia and A.~Laddha, ``{Subleading soft photons and large gauge
  transformations},'' \href{http://dx.doi.org/10.1007/JHEP11(2016)012}{{\em
  JHEP} {\bfseries 11} (2016) 012},
\href{http://arxiv.org/abs/1605.09677}{{\ttfamily arXiv:1605.09677 [hep-th]}}.
%%CITATION = ARXIV:1605.09677;%%.

\bibitem{Strominger:2017zoo}
A.~Strominger, {\em {Lectures on the Infrared Structure of Gravity and Gauge
  Theory}}.
\newblock 3, 2017.
\newblock \href{http://arxiv.org/abs/1703.05448}{{\ttfamily arXiv:1703.05448
  [hep-th]}}.

\bibitem{Pasterski:2017ylz}
S.~Pasterski, S.-H. Shao, and A.~Strominger, ``{Gluon Amplitudes as 2d
  Conformal Correlators},''
  \href{http://dx.doi.org/10.1103/PhysRevD.96.085006}{{\em Phys. Rev. D}
  {\bfseries 96} no.~8, (2017) 085006},
  \href{http://arxiv.org/abs/1706.03917}{{\ttfamily arXiv:1706.03917
  [hep-th]}}.

\bibitem{Schreiber:2017jsr}
A.~Schreiber, A.~Volovich, and M.~Zlotnikov, ``{Tree-level gluon amplitudes on
  the celestial sphere},''
  \href{http://dx.doi.org/10.1016/j.physletb.2018.04.010}{{\em Phys. Lett. B}
  {\bfseries 781} (2018) 349--357},
  \href{http://arxiv.org/abs/1711.08435}{{\ttfamily arXiv:1711.08435
  [hep-th]}}.

\bibitem{Puhm:2019zbl}
A.~Puhm, ``{Conformally Soft Theorem in Gravity},''
  \href{http://dx.doi.org/10.1007/JHEP09(2020)130}{{\em JHEP} {\bfseries 09}
  (2020) 130}, \href{http://arxiv.org/abs/1905.09799}{{\ttfamily
  arXiv:1905.09799 [hep-th]}}.

\bibitem{Albayrak:2020saa}
S.~Albayrak, C.~Chowdhury, and S.~Kharel, ``{On loop celestial amplitudes for
  gauge theory and gravity},''
  \href{http://dx.doi.org/10.1103/PhysRevD.102.126020}{{\em Phys. Rev. D}
  {\bfseries 102} (2020) 126020},
  \href{http://arxiv.org/abs/2007.09338}{{\ttfamily arXiv:2007.09338
  [hep-th]}}.

\bibitem{Arkani-Hamed:2020gyp}
N.~Arkani-Hamed, M.~Pate, A.-M. Raclariu, and A.~Strominger, ``{Celestial
  amplitudes from UV to IR},''
  \href{http://dx.doi.org/10.1007/JHEP08(2021)062}{{\em JHEP} {\bfseries 08}
  (2021) 062}, \href{http://arxiv.org/abs/2012.04208}{{\ttfamily
  arXiv:2012.04208 [hep-th]}}.

\bibitem{Gonzalez:2020tpi}
H.~A. Gonz\'alez, A.~Puhm, and F.~Rojas, ``{Loop corrections to celestial
  amplitudes},'' \href{http://dx.doi.org/10.1103/PhysRevD.102.126027}{{\em
  Phys. Rev. D} {\bfseries 102} no.~12, (2020) 126027},
  \href{http://arxiv.org/abs/2009.07290}{{\ttfamily arXiv:2009.07290
  [hep-th]}}.

\bibitem{Stieberger:2018edy}
S.~Stieberger and T.~R. Taylor, ``{Strings on Celestial Sphere},''
  \href{http://dx.doi.org/10.1016/j.nuclphysb.2018.08.019}{{\em Nucl. Phys. B}
  {\bfseries 935} (2018) 388--411},
  \href{http://arxiv.org/abs/1806.05688}{{\ttfamily arXiv:1806.05688
  [hep-th]}}.

\bibitem{Donnay:2023kvm}
L.~Donnay, G.~Giribet, H.~Gonz\'alez, A.~Puhm, and F.~Rojas, ``{Celestial open
  strings at one-loop},'' \href{http://dx.doi.org/10.1007/JHEP10(2023)047}{{\em
  JHEP} {\bfseries 10} (2023) 047},
  \href{http://arxiv.org/abs/2307.03551}{{\ttfamily arXiv:2307.03551
  [hep-th]}}.

\bibitem{Banerjee:2017jeg}
N.~Banerjee, S.~Banerjee, S.~Atul~Bhatkar, and S.~Jain, ``{Conformal Structure
  of Massless Scalar Amplitudes Beyond Tree level},''
  \href{http://dx.doi.org/10.1007/JHEP04(2018)039}{{\em JHEP} {\bfseries 04}
  (2018) 039}, \href{http://arxiv.org/abs/1711.06690}{{\ttfamily
  arXiv:1711.06690 [hep-th]}}.

\bibitem{Fung}
F.~Kang, ``Generalized mellin transforms,'' \href{http://dx.doi.org/I}{{\em Sc.
  Sinica} {\bfseries 7} (1958) 582--605}.

\bibitem{Zemanian}
A.H.Zemanian, ``{Generalized integral transforms},''
\newblock Dover Publications, New York, 1987.

\bibitem{alma}
O.~P. Misra, {\em Transform analysis of generalized functions}.
\newblock North-Holland mathematics studies ; 119. North-Holland, Amsterdam,
  1986.

\bibitem{Pott}
J.~Potthof, ``On differential operators in white noise analysis,''
  \href{http://dx.doi.org/10.1023/A:1010779905082}{{\em Acta Applicandae
  Mathematicae} {\bfseries 63} (2000) 333--347}.

\bibitem{ZagierAppendixTM}
D.~Zagier, ``The mellin transform and related analytic techniques,'' in {\em
  Appendix to: Zeidler, ``Quantum Field Theory: Basics in Mathematics and
  Physics. A Bridge Between Mathematicians and Physicists}, pp.~305--323.
\newblock Springer, Berlin, 2006.

\bibitem{Butzer1997}
P.~L. Butzer and S.~Jansche, ``A direct approach to the mellin transform.''
  {\em The journal of Fourier analysis and applications [[Elektronische
  Ressource]]} {\bfseries 3} no.~4, (1997) 325--376.
  \url{http://eudml.org/doc/59511}.

\end{thebibliography}\endgroup

\end{document}